\newtheorem{theorem}{Theorem}
 \newtheorem{corollary}{Corollary}
\newcommand{\lb}{\left(}
\newcommand{\rb}{\right)}
\newcommand{\lcb}{\left\{}
\newcommand{\rcb}{\right\}}
\newcommand{\xbold}{\mathbf{x}}
\newcommand{\ybold}{\mathbf{y}}
\newcommand{\SNRt}{\text{SNR}}
\newcommand{\INRt}{\text{INR}}
\newcommand{\oone}{\mathcal{O}(1)}
\begin{document}
\title{On the Secrecy Capacity Region of the $2$-user Z~Interference Channel with Unidirectional Transmitter Cooperation}
\author{\authorblockN{Parthajit~Mohapatra$^{*}$,~Chandra R. Murthy$^{\ddag}$, and ~Jemin~Lee$^*$}\\
	\authorblockA{$^*$iTrust, Centre for Research in Cyber Security, Singapore University of Technology and Design, Singapore\\
		$^\ddag$Department of ECE, Indian Institute of Science, Bangalore\\
		Email: \{parthajit,~jemin\_lee\}@sutd.edu.sg, cmurthy@ece.iisc.ernet.in}}
\maketitle
\vspace{-2cm}
\begin{abstract}
 In this work, the role of unidirectional limited rate transmitter cooperation is studied in the case of the $2$-user Z interference channel (Z-IC) with secrecy constraints at the receivers, on achieving two conflicting goals simultaneously: \emph{mitigating interference} and \emph{ensuring secrecy}. First, the problem is studied under the linear deterministic model.  The achievable schemes for the deterministic model use a
 fusion of cooperative precoding and transmission of a jamming signal. The optimality of the proposed scheme is established for the deterministic model
 for all possible parameter settings using the outer bounds derived by the authors in a previous work. Using the insights obtained from the deterministic model, a lower bound on the secrecy capacity region of the $2$-user Gaussian Z-IC are obtained. The achievable scheme in this case uses stochastic encoding in addition to cooperative precoding and transmission of a jamming signal. The secure sum generalized degrees of freedom (GDOF) is characterized and shown to be optimal for the weak/moderate interference regime. It is also shown that the secure sum capacity lies within $2$~bits/s/Hz of the outer bound for the weak/moderate interference regime
 for all values of the capacity of the cooperative link. Interestingly, in case of the deterministic model, it is found that there is no penalty on the capacity region of the Z-IC due to the secrecy constraints at the receivers in the weak/moderate interference regimes. Similarly, it is found that there is no loss in the secure sum GDOF for the Gaussian case due to the secrecy constraint at the receiver, in the weak/moderate interference regimes. The results highlight the importance of cooperation in
 facilitating secure communication over the Z-IC.
\end{abstract}
\section{Introduction}
The role of cooperation between the transmitters/receivers in interference limited scenarios has been studied extensively in the context of communication \emph{reliability}. 
However, the effect of the cooperation on communication \emph{secrecy} has not been well explored, and the ability to cooperate can have a very different effect on the achievable rates when there is a secrecy constraint (e.g., when the transmitted information should not be decodable at receivers except for the intended receiver) \cite{partha-arxiv-2014, partha-spawc-2013}. Therefore, this paper investigates the tension between the gain due to cooperation and the loss due to the secrecy constraints, on the rates achievable in an interference-limited communication system. In a system operating under secrecy constrains at receivers, the receivers cannot enhance their own rates by decoding and canceling the interference, since this does not preserve the communication secrecy. This leads to the following fundamental questions: 
(a) how much interference can be  mitigated through rate-limited transmitter cooperation, when there are secrecy constraints at receivers, and (b) what is the corresponding gain in the rate achieved by the cooperation between transmitters? Answering these questions helps in understanding the role of cooperation in managing interference and ensuring secrecy in multiuser scenarios.

The effect of transmitter cooperation on the  secrecy capacity is closely related to the underlying  channel model. The channel model considered in this paper is the Z-IC~\cite{liu-globecom-2004,liu-icst-2011}, which is one of the important information theoretic channel models.  In the Z-IC, only one of the two transmitters causes interference at the unintended receiver, which is also referred to as a \emph{partially} connected IC in~\cite{jafar-TIT-2014}. As a practical example, the Z-IC can model a 2-tier network, where the macro cell user is close to the edge of the femtocell  while the femtocell user is close to the femto base station (BS). 
Since the macro BS can typically support higher complexity transmission schemes, it could use the side information received from the femto BS to precode its data to improve its own rate and simultaneously ensure secrecy at the femtocell user. At the receivers, the macro cell user could experience significant interference from the femtocell BS, while the femtocell user receives little or no interference from the macro BS, leading to the Z-IC as the appropriate model for the system. Hence, answering the aforementioned questions in the context of the Z-IC can lead to useful insights in the 2-tier cellular network mentioned above.
\subsection{Prior work}
The IC has been studied extensively with and without secrecy constraints at the receivers under different settings \cite{etkin-TIT-2008, liu-TIT-2008,  lgamal2-TIT-2011}. However, the capacity region of the $2$-user Gaussian IC has remained an open problem, even without secrecy constraint, except for some specific cases like the strong interference regime and the very strong interference regime \cite{carleial-TIT-1975, sato-TIT-1981}. In \cite{liu-TIT-2008}, an achievable scheme using random binning is proposed for the discrete memoryless IC with secrecy constraints at receivers. A $K$-user Gaussian IC is considered in \cite{lgamal2-TIT-2011}, and the achievable scheme uses a combination of interference alignment along with precoding to ensure secrecy.

It has been shown that cooperation between the transmitters or receivers in case of IC can improve the overall performance of the system, when there is no secrecy constraint at the receiver \cite{wang-TIT-2011, wang2, vinod1}. However, the effect of cooperation on managing interference and ensuring secrecy in interference limited scenarios is not well understood. Some of the works in this direction can be found in \cite{partha-arxiv-2014, geng-globecom-2016}. It has been shown that, with cooperation, it is possible to achieve nonzero secrecy rate in most of the cases, \emph{even when the unintended receiver has a better channel compared to the legitimate receiver.} The effect of cooperation on the achievable rates for other communication models with secrecy constraints can be found in \cite{ekrem2, ekrem1, awan1, lgamal-TIT-2011}.

The Z-IC model has also been studied in existing literature with and without secrecy
constraints \cite{liu-globecom-2004, liu-TIT-2009, li-isit-2008}. In \cite{liu-globecom-2004}, lower bounds  on the capacity region of
the Gaussian Z-IC for the weak and moderate interference regimes are derived. In
\cite{liu-TIT-2009}, it is shown that superposition encoding with partial decoding is optimal for a certain class of Z-IC. In \cite{li-isit-2008}, the Z-IC model is considered with secrecy constraints at the receivers and achievable schemes are obtained for the deterministic and the Gaussian model in the weak/moderate interference regime. For the deterministic model, the secrecy capacity region is characterized. The role of cooperation in the Z-IC without the secrecy constraint has been investigated in~\cite{bagheri-arxiv-2010, lei-TIT-2012, do-allerton-2009}. In \cite{bagheri-arxiv-2010}, both the encoders can cooperate through noiseless
links with finite capacities and the sum capacity of the channel is
characterized to within $2$ bits of the outer bound. The role of receiver cooperation in Z-IC is investigated in \cite{lei-TIT-2012, do-allerton-2009}. However,
the role of limited transmitter cooperation in managing interference and ensuring secrecy in case of Z-IC has
not been investigated in the existing literature, and is therefore focus of this work.
\subsection{Contributions}
This work considers the $2$-user symmetric Z-IC with unidirectional transmitter cooperation in the form of a rate-limited link from
transmitter~$2$ (which causes interference) to transmitter~$1$ (which does not cause
interference), and with secrecy constraints at receivers.  The key challenge here is to devise techniques for simultaneously canceling interference and guaranteeing secrecy. First, the problem is solved under the deterministic approximation of the channel. By motivating from the results in the deterministic model, an achievable scheme is derived for the Gaussian channel model, which is applicable for all the interference regimes.

 One of the key techniques used in the achievable scheme for both the deterministic and Gaussian models is \emph{cooperative precoding} performed at transmitter~$1$, which cancels interference at receiver~$1$ and thereby simultaneously ensures secrecy. However,  the amount of the interference that can be canceled at the receiver is limited by the rate of the cooperative link. In the deterministic model, transmission of a jamming signal along with interference cancelation is required to achieve the capacity. On the other hand, the achievable scheme for the Gaussian model uses stochastic encoding in addition to cooperative precoding and transmission of a jamming signal.  The main contributions of the paper are summarized below:
\begin{enumerate}
 \item The achievable scheme for the deterministic model uses a careful combination of transmission of random bits and cooperative precoding to cancel the interference at the unintended receiver. The secrecy capacity region of the Z-IC can be characterized without sharing any common randomness between the transmitters for all values of the capacity of the cooperative link. This is in contract to the case of IC, where sharing common randomness between the transmitters through the cooperative link improves the performance of the achievable scheme compared to the case of sharing data bits only in some cases \cite{partha-arxiv-2014}. It is also shown that the capacity region of the deterministic Z-IC does not enlarge if the perfect secrecy constraint at the receiver is replaced with the weak or strong notion of secrecy.
  
  \item The achievable scheme for the Gaussian model uses a combination of stochastic encoding, interference cancelation and artificial noise transmission. The novelty in the achievable scheme lies in fusing stochastic encoding with interference cancelation.  In the weak/moderate interference regime, the secure sum generalized degrees of freedom (GDOF) is characterized and shown to be optimal. The secure sum capacity of the Z-IC is also shown to lie within $2$~bits/s/Hz of the outer bound in the weak/moderate interference regime for all possible values of the capacity of the cooperative link. 
  
  \item  The results on the secrecy capacity region of the $2$-user Z-IC without
  cooperation between the transmitters can be obtained as special case of the analysis for both the
  models. Note that, prior to this work, the capacity region of the Z-IC  for the deterministic model with secrecy constraints was
  not fully known, even for the non-cooperating case~\cite{li-isit-2008}.
  
\end{enumerate}

It is shown that limited-rate transmitter cooperation can greatly facilitate secure
communication over the Z-IC in weak/moderate and high interference regimes. In case
of the deterministic model, it is found, surprisingly, that there is no penalty on the capacity region of the Z-IC due to the secrecy constraints at the receivers in the weak/moderate interference regimes. Thus, the proposed scheme allows one to get secure communications for free. Similarly, it is found that there is no loss in the sum GDOF for the Gaussian case due to the secrecy constraint at the receiver, in the weak/moderate interference regimes. For the deterministic model, it is found that for every one bit increase in the capacity of the cooperative link, the secure sum rate can increase by one bit, in the weak, moderate and high interference regimes, until the sum rate is saturated by its maximum possible value. Part of this work has appeared in \cite{partha-isit-2015}.

\textit{Notation:} Lower case or upper case letters represent scalars, lower case
boldface letters represent vectors, and upper case boldface letters represent
matrices.

\textit{Organization:} Section~\ref{sec:system_model} presents the system model.
In Secs~\ref{sec:ach-determin} and \ref{sec:ach-gaussian}, the achievable schemes for the deterministic and Gaussian models are presented, respectively. Sec.~\ref{sec:approx-sec-capacity} presents the approximate secure sum capacity characterization in case of weak/moderate interference regime. In
Sec.~\ref{sec:results}, some numerical examples are presented to offer a
deeper insight into the bounds. Concluding remarks are offered in
Sec.~\ref{sec:conclusion}; and the proofs of the theorems are provided in the Appendices.
\section{System Model}\label{sec:system_model}
\begin{figure}[t]
\centering
\mbox{\subfigure[Gaussian model]{\includegraphics[width=1.5in, height=1.5in]{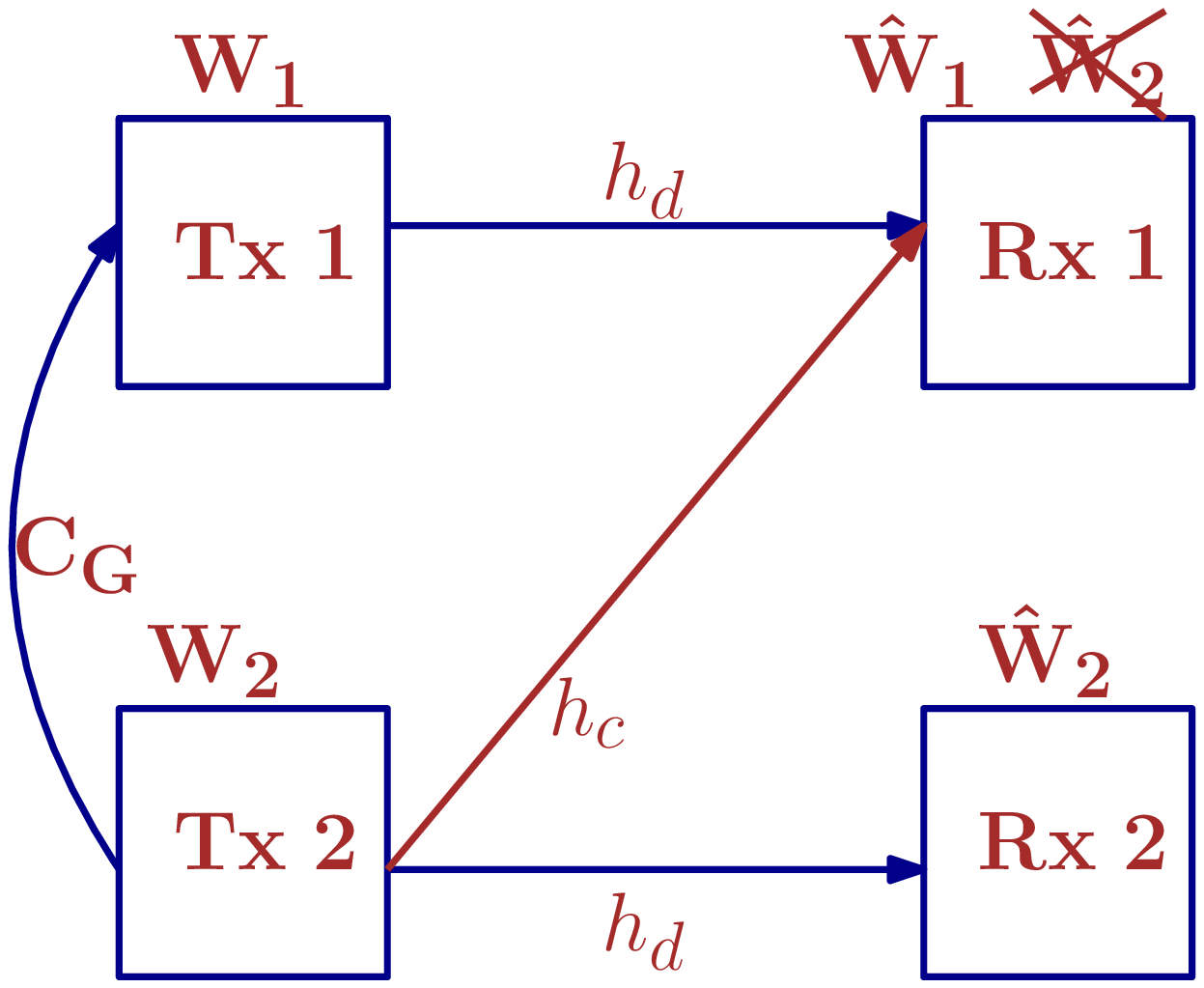} \label{fig:gaussian_model}}\quad
\subfigure[Deterministic model]{\includegraphics[width=1.5in, height=1.5in]{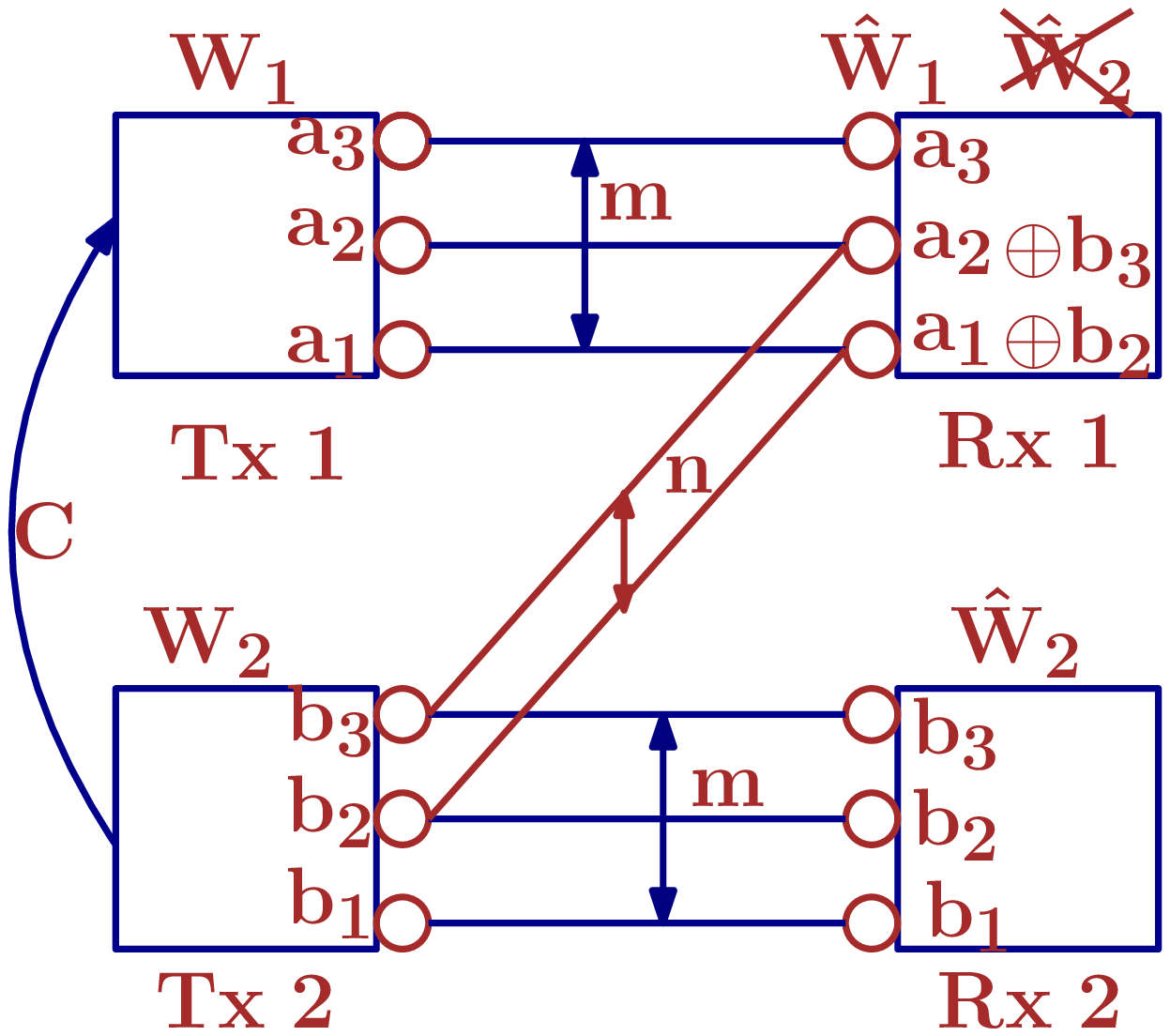} \label{fig:deterministic_model}}}
\caption[]{$2$-user Z-IC with unidirectional transmitter cooperation (from transmitter~$2$ to transmitter~$1$).}\label{fig:outersplit}
\end{figure}
Consider a $2$-user Gaussian symmetric Z-IC  with unidirectional and rate-limited transmitter
cooperation from transmitter~$2$ to $1$, as shown in Fig.~\ref{fig:gaussian_model}.\footnote{The model is
termed as symmetric as the links from transmitter~$1$ to receiver~$1$  and transmitter~$2$ to receiver~$2$
are of the same strength.} In the Z-IC, only one of the users (i.e., transmitter~$2$) causes interference to the unintended receiver (i.e., receiver~$1$). The received signal at
receiver~$i$, $\ybold_i$, is given by
\begin{align}
y_{1} = h_d x_1 + h_c x_{2} + z_1;  y_2 = h_dx_2 + z_2, \label{sysmodel1}
\end{align}
where $z_j$ $(j=1,2)$ is the additive white Gaussian noise, distributed as $\mathcal{N}(0,1)$. Here, $h_d$ and $h_c$ are the channel gains
of the direct and interfering links, respectively. The input signals ($x_i$)
are required to satisfy the power constraint: $E[|x_i|^2] \leq P$.  The transmitter~$2$ cooperates with transmitter~$1$ through a noiseless and secure link of finite rate denoted by $C_G$.

The equivalent deterministic model of \eqref{sysmodel1} at high SNR is given by~\cite{li-isit-2008,wang-TIT-2011}
\begin{align}
\mathbf{y}_{1} = \mathbf{D}^{q-m}\mathbf{x}_{1} \oplus \mathbf{D}^{q-n}\mathbf{x}_{2}; \  \mathbf{y}_{2} = \mathbf{D}^{q-m}\mathbf{x}_{2}, \label{sysmodel2}
\end{align}
where $\mathbf{x}_{1}$ ($\mathbf{x}_{2}$) is the binary input vector of the deterministic Z-IC from user~$1$
(user~$2$) of length $m$ ($\max\{m,n\}$); $\mathbf{y}_{1}$ $(\mathbf{y}_{2})$ is the binary output vector of
length $\max\{m,n\}$ ($m$);  $\mathbf{D}$ is a $q \times q$ downshift matrix with elements $d_{j',j''}=1$ if
$2 \leq j'=j''+1\leq q$ and $d_{j',j''}=0$ otherwise; and the operator $\oplus $ stands for  modulo-$2$ addition,
i.e., the \textsf{XOR} operation. The deterministic model is also shown in Fig.~\ref{fig:deterministic_model}.

The deterministic model is a first order approximation of a Gaussian channel, where all the signals are represented
by their binary expansions. Here, noise is modeled by truncation, and the superposition of signals at the
 receiver is modeled by \textit{modulo}~$2$ addition. Hence, the parameters $m$, $n$, and $C$ of the deterministic 
 model are related to the Gaussian symmetric Z-IC as $ m = (\lfloor 0.5 \log \text{\textsf{SNR}}\rfloor)^{+},\: n = (\lfloor 0.5 \log \text{\textsf{INR}}\rfloor)^{+},$ 
 and $C = \lfloor C_G\rfloor$. Note that the
 notation followed for the deterministic model is the same as that presented in \cite{wang-TIT-2011}. The
 bits $a_{i} \in \mathcal{F}_{2}$ and $b_i \in \mathcal{F}_{2}$ denote the  information bits of transmitters $1$ and $2$, respectively, sent
 on the $i^{\text{th}}$ level, with the levels numbered starting from the bottom-most entry. 

The transmitter~$i$ has a message $W_{i}$, which should be decodable at the intended receiver~$i$,
but needs to be kept secret from the other, i.e., the unintended receiver $j$ ($j \neq i$), and this is termed as the \emph{secrecy constraint}. Note that, for the Z-IC, the message $W_1$ is secure as there is no link from transmitter~$1$ to receiver~$2$. Hence, the goal is to ensure that $W_2$ is not decodable at receiver~$1$. The encoding at  
transmitter~$1$ should satisfy the causality constraint, i.e., it cannot depend 
on the signal to be sent over the cooperative link in the future. The signal sent over the cooperative link from
transmitter~$2$ to transmitter~$1$ is represented by $\mathbf{v}_{21}$. It is
also assumed that the transmitters trust each other completely and they do not deviate from the agreed
schemes, for both the models.
 
For the deterministic model, the encoded message at transmitter~$1$ is a function of its own data bits, the bits received through the
cooperative link, and possibly some random bits, whereas the encoded message at transmitter~$2$ is independent
of the other user's data bits.  The bits transmitted on the different levels of the deterministic
model are chosen to be equiprobable Bernoulli distributed, denoted by
 $\mathcal{B}(\frac{1}{2})$. The decoding is based on solving the linear equation in \eqref{sysmodel2}
at each receiver. For secrecy, it is required to satisfy the perfect secrecy constraint, i.e., $I(W_{i}; \mathbf{y}_{j}) = 0, i,j \in \{1,2\} \text{ and } i \neq j$
in the case of the deterministic model~\cite{shannon-bell-1949}. In the later part of the sequel, it is shown that replacing the perfect secrecy constraint at receiver with the strong or weak secrecy constraint  does not enlarge the capacity region of the deterministic model.

In the Gaussian case, the details of the encoding and decoding
schemes can be found in Sec.~\ref{sec:ach-gaussian}. For
the Gaussian model, the notion of weak secrecy is considered, i.e., $\frac{1}{N} I(W_2; \ybold_1^N) \rightarrow 0$ as $N \rightarrow \infty$, where $N$ corresponds to the block length~\cite{wyner-bell-1975}. 

The following interference regimes are considered: weak/moderate interference regime $(0 \leq \alpha \leq 1)$, high interference regime $(1 < \alpha \leq 2)$
and very high interference regime $(\alpha > 2)$, where, with slight abuse of notation $\alpha \triangleq
\frac{n}{m}$ is used for the deterministic model and $\alpha \triangleq \frac{\log \INRt}{\log \SNRt}$ is used for the
Gaussian model. The quantity $\alpha$ captures the amount of coupling between the
signal and interference.

\section{Linear Deterministic Z-IC: Achievable Schemes}\label{sec:ach-determin}
When there is a high capacity cooperative link
from transmitter~$2$ to transmitter~$1$, the interference caused at receiver~$1$ by transmitter~$2$ can be completely canceled 
by using the signal received from transmitter~$2$ via the cooperative link at transmitter~$1$. This cancelation
of interference offers two benefits: it improves the achievable rate, and also ensures secrecy, since the signal sent by transmitter~$2$ is no longer decodable at receiver~$1$.  When the capacity of the
cooperative link is not sufficiently high, it is not possible to design the precoding to completely eliminate the interference caused by transmitter~$2$
at receiver~$1$. In this case, the transmission of random bits (i.e., transmission of artificial noise \cite{yates1, tang-TIT-2011}) by transmitter~$1$ can ensure secrecy 
of the data bits sent by transmitter~$2$ at receiver~$1$, in turn enabling transmitter~$2$ to achieve a higher secure rate of communication. Thus, the achievable scheme proposed below uses 
a carefully designed combination of interference
cancelation and transmission of random bits depending on the capacity of the cooperative link $C$ and the value of $\alpha$. In the following, it is shown that the corner points of the outer bounds on the secrecy capacity region of the deterministic model  \cite{partha-isit-2015} are achievable for different interference regimes. Hence, the achievable results stated in Theorems~\ref{th:theorem-det-weak}-\ref{th:theorem-det-veryhigh} correspond to the secrecy capacity region of the deterministic Z-IC. 
\subsection{Weak/moderate interference regime $(0 \leq \alpha \leq 1)$}\label{sec:ach-det-weakmod}
\begin{theorem}\label{th:theorem-det-weak}
In the weak/moderate interference regime, i.e., $0 \leq \alpha \leq 1$, the secrecy capacity region of the $2$-user deterministic Z-IC with unidirectional and rate-limited transmitter cooperation is
\begin{align}
& R_1 \leq m, R_2 \leq m, R_1 + R_2 \leq 2m-n + C. \label{eq:th-weakmod-ib1}
\end{align}
\end{theorem}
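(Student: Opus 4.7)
Since the converse matching the three inequalities in \eqref{eq:th-weakmod-ib1} has been established in the authors' earlier outer-bound paper, my plan is to prove the forward (achievability) direction by exhibiting schemes for the two non-trivial corner points of the polytope, namely $(R_1,R_2)=(m,\,m-n+C)$ and $(R_1,R_2)=(m-n+C,\,m)$, and then invoking time-sharing. When $C\geq n$ the two corners collapse to $(m,m)$ and the scheme for the first corner already produces it, using only $n$ of the $C$ available cooperation bits.

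The corner $(m,m-n+C)$ I would achieve by cooperative precoding alone. Transmitter~$2$ places $m-n$ of its data bits on its bottom $m-n$ levels, which fall below the noise at receiver~$1$ after the $\mathbf{D}^{q-n}$ shift and are therefore automatically secret, and places the remaining $C$ data bits on $C$ of its top $n$ interfering levels, forwarding these $C$ bits to transmitter~$1$ via the cooperative link. Transmitter~$1$ then transmits, on each of those $C$ levels, the XOR of its own data bit with the shared bit, so that the interference is cancelled on the air and receiver~$1$ observes only transmitter~$1$'s bit; on every other level transmitter~$1$ simply transmits its own data bit. Receiver~$1$ thereby decodes all $m$ bits of $W_1$, receiver~$2$ decodes all $m-n+C$ bits of $W_2$ from its clean observation, and every bit of $W_2$ reaching receiver~$1$ is either cancelled or below the noise, which should yield $I(W_2;\mathbf{y}_1)=0$ after a short joint-distribution check.

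The corner $(m-n+C,m)$ I would achieve by combining cooperative precoding with the transmission of a jamming signal. Transmitter~$2$ now places data bits on all $m$ levels and shares any $C$ of its top-$n$ bits with transmitter~$1$. Transmitter~$1$ places $m-n$ of its data bits on its non-interfering levels, $C$ data bits on the $C$ cooperatively-precoded levels (same XOR construction as above), and independent $\mathcal{B}(\frac{1}{2})$ random bits on the remaining $n-C$ interfering levels. The secrecy argument has three ingredients: the $C$ cancelled levels reveal only transmitter~$1$'s independent data bit; the $n-C$ jammed levels reveal the XOR of a uniform random mask with a data bit of $W_2$, which is itself uniform and independent of that bit; and the bottom $m-n$ levels of $W_2$ are invisible at receiver~$1$. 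Together these should make $\mathbf{y}_1$ uniform and independent of $W_2$ on every coordinate, giving perfect secrecy.

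The point I expect will need real care is reconciling the scheme with the causality constraint on the cooperative link: transmitter~$1$'s symbol at time $t$ depends on a bit of transmitter~$2$ intended for time $t$, so I would organize the transmission in a block-Markov fashion, forwarding over the cooperative link during block $b-1$ the bits needed in block $b$; the resulting one-block penalty is negligible in the capacity limit. Time-sharing between the two corner-point schemes then fills in the entire polytope \eqref{eq:th-weakmod-ib1}, and combined with the prior outer bound this closes the capacity region.
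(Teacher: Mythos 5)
Your proposal follows essentially the same route as the paper: cite the prior outer bound, achieve the corner points $(m,\,m-n+C)$ by cooperative XOR precoding alone and $(m-n+C,\,m)$ by precoding on $C$ levels plus $\mathcal{B}(\tfrac{1}{2})$ jamming on the remaining $n-C$ interfering levels, and time-share. The level alignment and the secrecy arguments you give for these two corners are correct and match the paper's Cases~2 and~3.

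One small gap: you dismiss $(m,0)$ and $(0,m)$ as trivial, but under the secrecy constraint $(0,m)$ is not trivial and is a genuine vertex of the polytope that is \emph{not} in the convex hull of $(0,0)$, $(m,0)$, $(m,\,m-n+C)$ and $(m-n+C,\,m)$ when $C<n$. To let user~$2$ send on all $m$ levels with $R_1=0$, transmitter~$1$ must still transmit jamming bits on levels $[1:n]$ to protect the bits of $W_2$ that arrive above the noise floor at receiver~$1$; the paper treats this explicitly as its Case~1. The omission is easily repaired — your second scheme with transmitter~$1$'s data rate set to zero and jamming extended to all $n$ interfering levels does the job — but as written your time-sharing argument does not cover the full region. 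Your causality remark and the block-Markov fix are a reasonable extra precaution that the paper handles only implicitly.
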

\begin{proof}
In this regime, using Theorem~$1$ in \cite{partha-isit-2015}, the secrecy capacity region of the deterministic Z-IC is upper bounded as  $R_1 \leq m$, $R_2 \leq m$, and $R_1 + R_2 \leq 2m-n + C$. Thus, the outer bound  is characterized by four corner rate pairs 
$(R_1, R_2)$ corresponding to $(m,0)$, $(0,m)$, $(m,m-n+C)$ and $(m-n+C,m)$. In the following, an achievable scheme 
is proposed to achieve these corner points, thus, achieving the capacity of the deterministic Z-IC in
the weak/moderate interference regime.
\subsubsection{Case 1 $(R_1,R_2)=(m,0)$ and $(R_1,R_2)=(0,m)$}
To achieve the point $(m,0)$, transmitter~$1$ sends $m$ data bits on the levels $[1:m]$ and transmitter~$2$ 
remains silent. To achieve the corner point $(0,m)$, transmitter~$2$ sends data bits on the levels $[1:m]$. As 
the data bits sent on the levels $[m-n+1:m]$ are received at receiver~$1$, transmitter~$1$ sends random bits 
generated from $\mathcal{B}(\frac{1}{2})$ distribution on the levels $[1:n]$ to ensure secrecy of the data bits of 
transmitter~$2$ at receiver~$1$.
\subsubsection{Case 2 $(R_1,R_2)=(m,m-n+C)$}
As data bits sent by transmitter~$1$ are not received
at receiver~$2$, it can send $m$ data bits securely. Transmitter~$2$ can send at least $m-n$ data bits 
securely as the links corresponding to the levels $[1:m-n]$ are not present at receiver~$1$. To transmit 
at the higher levels $[m-n+1:m-n+C]$, transmitter~$2$ shares $C$ cooperative bits with the  transmitter 
and the transmitter~$1$ precodes these cooperative data bits with its own data bits and transmits on the levels $[1:C]$.
Transmitter~$2$ also sends the data bits shared with the transmitter~$1$ on the levels $[m-n+1:m-n+C]$ so that the data bits of transmitter~$2$ can be canceled at receiver~$1$. This scheme not only cancels the interference, but at the same time ensures secrecy. Hence, transmitter~$2$ can achieve a rate of $m-n+C$, while transmitter~$1$ can achieve a rate of $m$.
\subsubsection{Case 3 $(R_1,R_2)=(m-n+C,m)$}
In contrast to the previous case, the achievable 
scheme in this case uses transmission of random bits in addition to interference cancelation. Transmitter~$2$ sends data 
bits on the levels $[1:m]$ and shares $C$ data bits $[b_{m-n+1}:b_{m-n+C}]$ with transmitter~$1$. Transmitter~$1$ 
precodes these cooperative data bits with its own data bits and sends these precoded data bits on the levels $[1:C]$. 
On the levels $[C+1:n]$, transmitter~$1$ sends random bits to keep the data of transmitter~$2$ confidential to 
receiver~$1$.\footnote{When $C=n$, the interference caused by transmitter~$2$ can be completely 
	eliminated at receiver~$1$, and it is not required to send random bits from transmitter~$1$.} On the remaining 
levels $[n+1:m]$, transmitter~$1$ sends its own data bits. As transmitter~$2$ does not cause any interference to the data bits sent on these levels by transmitter~$1$, receiver~$1$ can decode these data bits. Hence, transmitter~$1$ achieves a
rate of $m-n+C$, while transmitter~$2$ achieves a rate of $m$.

The achievable schemes for the two cases (Case~2~and~3) are also illustrated in Fig.~\ref{fig:weakach3}. 
\begin{figure}[t]
	\centering
	\mbox{\subfigure[][$(R_1,R_2)=(5,3)$] {\includegraphics[width=1.7in,height=1.8in]{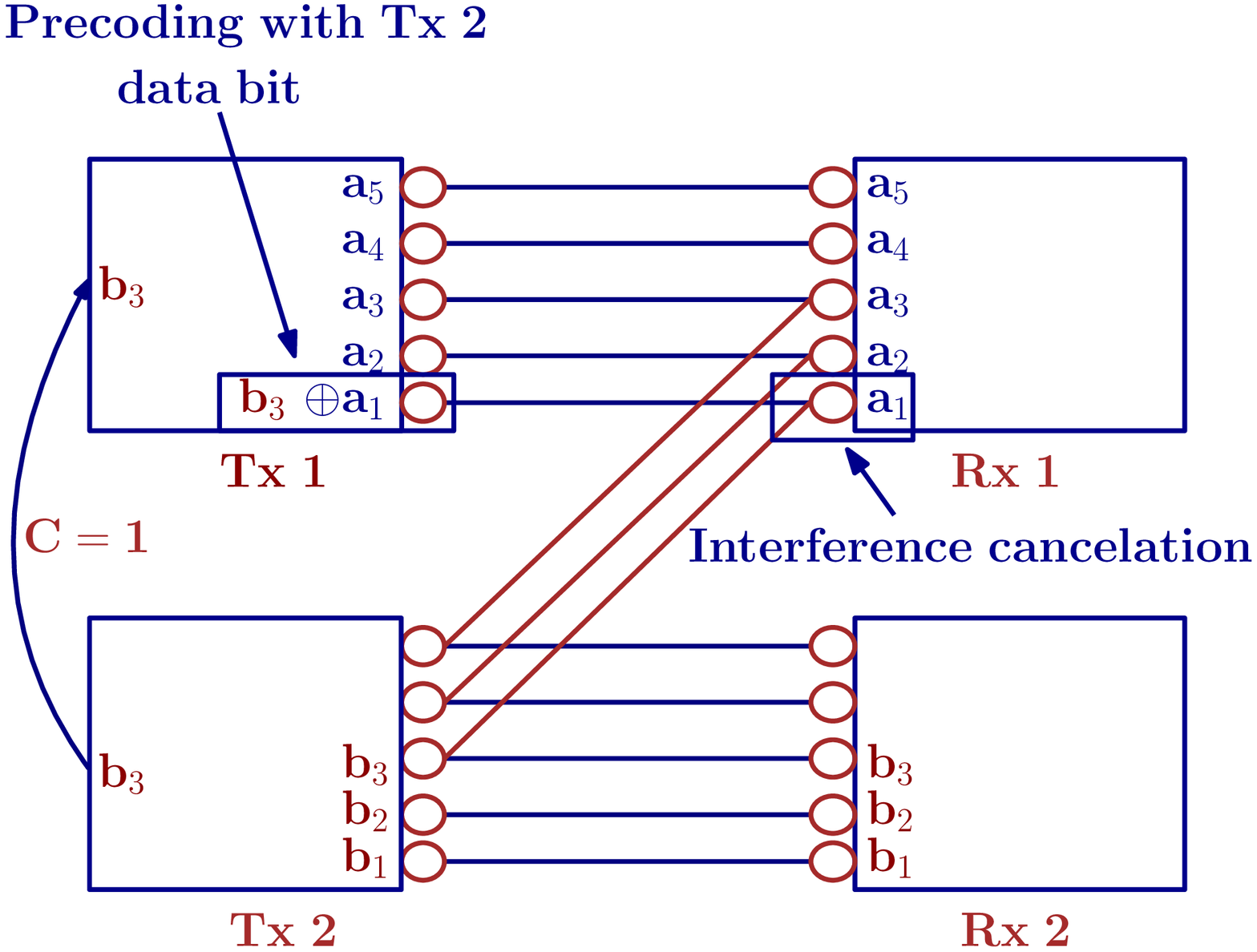}\label{fig:weakach1}} \quad
		\subfigure[][$(R_1,R_2)=(3,5)$] {\includegraphics[width=1.7in,height=1.8in]{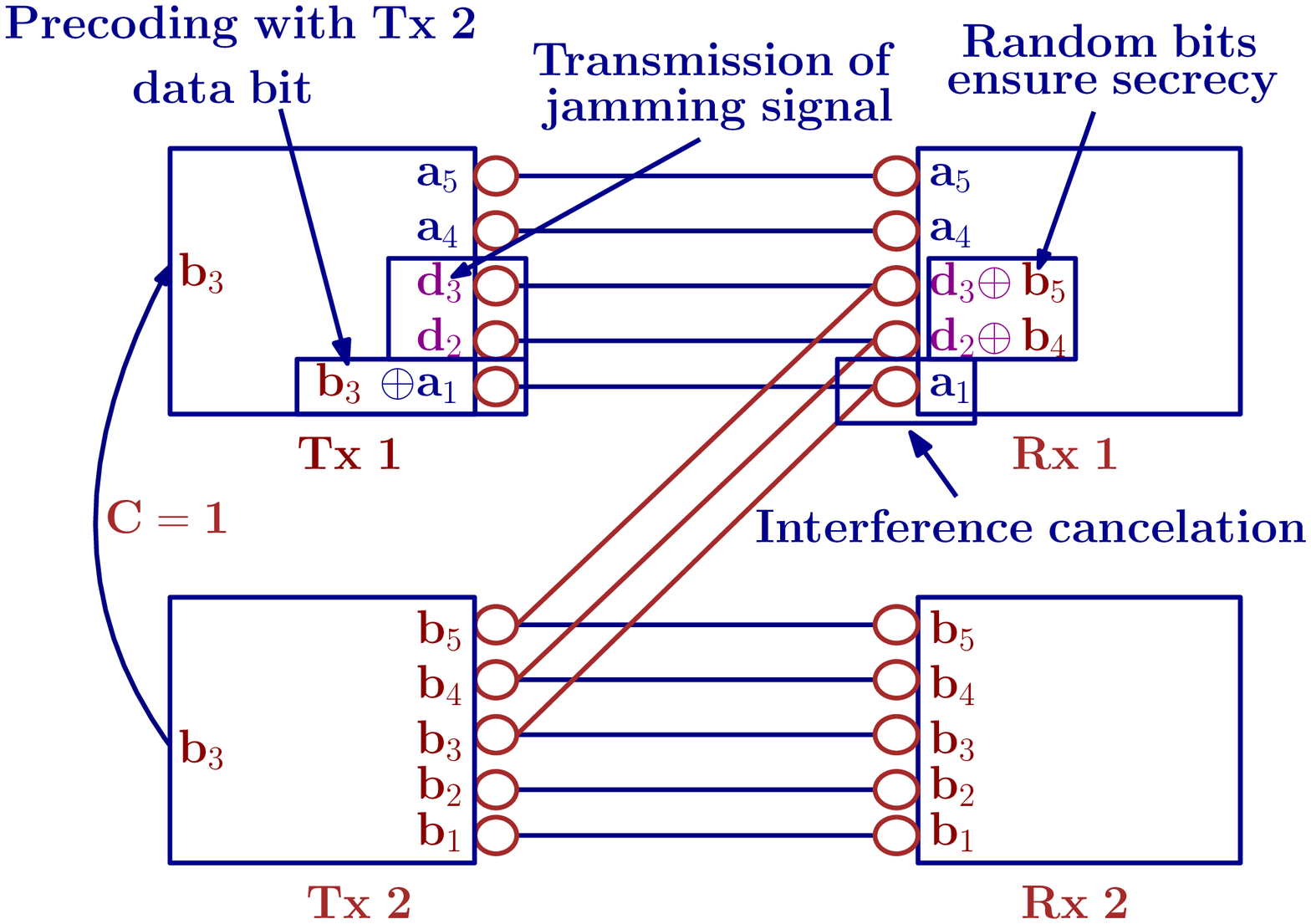}\label{fig:weakach2}}} \\
	\caption[]{Deterministic Z-IC with $m=5$, $n=3$ and $C=1$.}\label{fig:weakach3}
\end{figure}
\end{proof}
\textit{Remarks:}
\begin{itemize}
	\item The derivation of the outer bound in \cite{partha-isit-2015} does not use the secrecy constraint at the receiver. The proposed schemes can achieve the four corner points of the outer bound, and  hence, the secrecy constraints at the receivers do not result in any penalty on the capacity region. Thus, the capacity region of the deterministic Z-IC is characterized with and without secrecy constraints for all the values of $C$.
	\item When $0 < \alpha \leq 1$, both the users can achieve the maximum rate of $m$ simultaneously if $C \geq m$, otherwise they cannot.
\end{itemize}
\subsection{High interference regime $(1 < \alpha < 2)$}\label{sec:ach-det-high-intf}
\begin{theorem}\label{th:theorem-det-high}
	In the high interference regime, i.e., $1 < \alpha < 2$, the secrecy capacity region of the $2$-user deterministic Z-IC with unidirectional and rate-limited transmitter cooperation is
	\begin{align}
	& R_1 \leq m, R_2 \leq 2m -n,  R_1 + R_2 \leq m + C. \label{eq:th-weakmod-ib2}
	\end{align}
\end{theorem}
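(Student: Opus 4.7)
The plan is to mirror the structure of the proof of Theorem~\ref{th:theorem-det-weak}. The converse is already provided by Theorem~$1$ of \cite{partha-isit-2015}, which supplies exactly the three outer-bound inequalities $R_1 \leq m$, $R_2 \leq 2m-n$ and $R_1+R_2 \leq m+C$ in the high-interference regime. What remains is achievability: for each corner of the resulting polytope, I would exhibit an explicit linear-deterministic construction that is decodable at the intended receiver and for which $I(W_2;\mathbf{y}_1^N) = 0$.

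I would first enumerate the corner points as a function of $C$. When $0 \leq C \leq 2m-n$, the non-trivial corners are $(m,0)$, $(m,C)$, $(n-m+C,\,2m-n)$ and $(0,\,2m-n)$; when $C \geq 2m-n$ the sum-rate bound is inactive and only the rectangular corner $(m,\,2m-n)$ remains. The key geometric observation that drives every scheme is the following: in high interference the $m$ levels of $\mathbf{x}_1$ reach $\mathbf{y}_1$ only on levels $[1{:}m]$, while the $n$ levels of $\mathbf{x}_2$ reach $\mathbf{y}_1$ on levels $[1{:}n]$ and reach $\mathbf{y}_2$ only on levels $[n-m+1{:}n]$ of $\mathbf{x}_2$. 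Consequently the intersection of the sets ``visible to receiver~$2$'' and ``reachable by transmitter~$1$ at receiver~$1$'' is exactly the strip of width $2m-n$ corresponding to levels $[n-m+1{:}m]$ of $\mathbf{x}_2$; every data bit of transmitter~$2$ that must be kept secret has to be placed inside this strip, which is the combinatorial content of the bound $R_2 \leq 2m-n$.

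The point $(m,0)$ is trivial. For $(0,\,2m-n)$, transmitter~$2$ fills the strip with its $2m-n$ data bits, transmitter~$1$ places independent $\mathcal{B}(\tfrac{1}{2})$ bits on the matching $2m-n$ levels of $\mathbf{x}_1$, and receiver~$1$ sees a one-time pad of the data. For $(m,C)$ with $C \leq 2m-n$, transmitter~$2$ places its $C$ data bits on the bottom $C$ levels of the strip and shares them with transmitter~$1$ over the cooperative link; transmitter~$1$ XORs each shared bit with one of its own data bits at the aligned level of $\mathbf{x}_1$, so that the shared bits cancel at receiver~$1$ and leave only transmitter~$1$'s own data, while the remaining $m-C$ own data bits of transmitter~$1$ occupy the other $m-C$ levels of $\mathbf{x}_1$, which face zero-valued levels of $\mathbf{x}_2$ at receiver~$1$. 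The dual corner $(n-m+C,\,2m-n)$ fuses the two techniques: transmitter~$2$ fills the entire strip with data; transmitter~$1$ cancels $C$ shared bits via precoding as above, jams the remaining $2m-n-C$ strip bits with independent $\mathcal{B}(\tfrac{1}{2})$ bits, and sends its last $n-m$ own data bits on the levels of $\mathbf{x}_1$ that align with the zero levels of $\mathbf{x}_2$ at receiver~$1$; this yields $R_1 = (n-m) + C$ and $R_2 = 2m-n$. Finally, for $C \geq 2m-n$, running the $(m,C)$ construction with exactly $2m-n$ shared bits achieves the corner $(m,\,2m-n)$.

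Perfect secrecy is verified uniformly across all four constructions by the observation that every level of $\mathbf{y}_1$ is either identically zero, a function of transmitter~$1$'s own data alone (via cooperative cancellation), or a data bit of transmitter~$2$ XORed with a fresh independent uniform bit generated at transmitter~$1$; in each case the output is independent of $W_2$, so $I(W_2;\mathbf{y}_1^N)=0$ follows immediately (and the same argument handles weak and strong secrecy). I do not anticipate any deep information-theoretic obstacle; the main work will be the careful level-by-level bookkeeping that establishes the ``strip of width $2m-n$'' picture above and verifies that, in every corner, the counts of own-data, shared-data, random, and zero bits on each transmitter add up to the target rate pair.
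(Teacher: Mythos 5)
Your proposal is correct and follows essentially the same route as the paper: the converse is imported from the outer bound of the companion work (the paper invokes Theorem~2, not Theorem~1, of that reference for the high-interference regime), and achievability is shown corner by corner, with $(m,C)$ obtained by cooperative precoding that cancels the $C$ shared bits at receiver~$1$, and $(n-m+C,\,2m-n)$ obtained by combining that cancelation with one-time-pad jamming of the remaining $2m-n-C$ strip levels and placing transmitter~$1$'s residual data on the interference-free levels $[1:n-m]$. Your explicit identification of the width-$(2m-n)$ strip and the level-by-level secrecy verification matches the paper's construction; the only omitted (standard) step is invoking time sharing to fill in the boundary between corner points.
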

\begin{proof}
In this regime, using Theorem~$2$ in \cite{partha-isit-2015}, the secrecy capacity region of the deterministic Z-IC is upper bounded as $R_1 \leq m$, $R_2 \leq 2m-n$, and $R_1 + R_2 \leq m + C$. This outer bound is characterized by four corner rate pairs $(R_1, R_2)$ corresponding 
to $(m,0)$, $(0,2m-n)$, $(m,C)$ and $(C+ n-m,2m-n)$. The achievability of the first two corner points is trivial. 
Note that, for achieving these two corner points, sharing of data from transmitter~$2$ to transmitter~$1$ is not required. 
In the following, achievability of the remaining two corner points is shown. 
\subsubsection{Case 1 $(R_1,R_2)=(m,C)$} To achieve this corner point, transmitter~$1$ sends data bits on 
the levels $[1:m]$ and hence, it can achieve a secrecy rate of $m$. The transmitter~$2$ shares $[b_1:b_C]$ to 
transmitter~$1$ through the cooperative link. Transmitter~$1$ precodes (XOR) the cooperative data bits 
received from transmitter~$2$ with its own data bits, and transmits them on the levels corresponding to $[n-m+1:n-m+C]$. 
Precoding in this way not only cancels the interference caused by the data bits transmitted on the levels $[n-m+1:n-m+C]$ at transmitter~$2$, but also ensures secrecy. Hence, transmitter~$2$ can send $C$ data bits securely.
\begin{figure}
	\centering
	\mbox{\subfigure[][$(R_1,R_2)=(4,1)$]{\includegraphics[width=1.7in, height=1.8in]{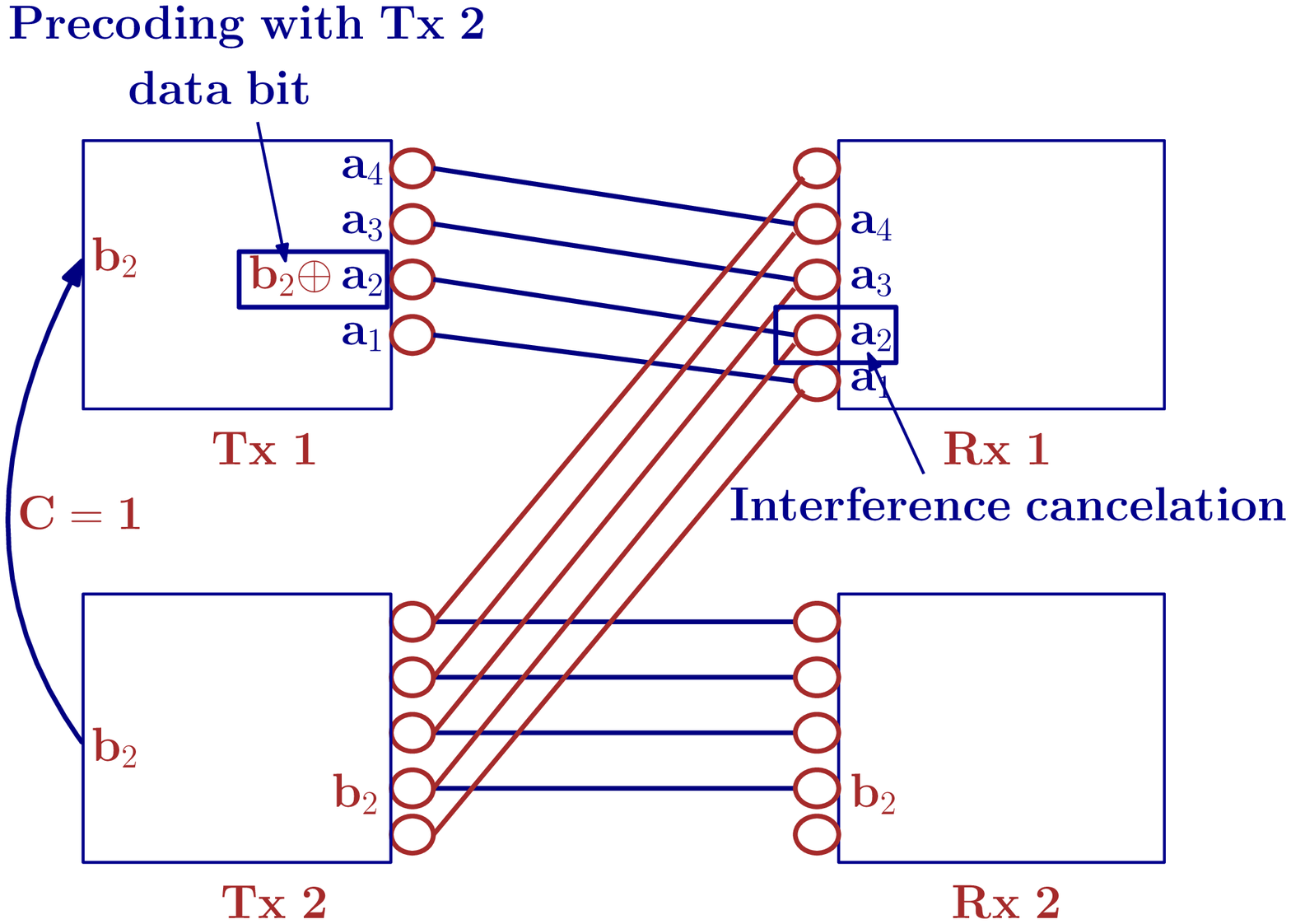}\label{fig:highach1}} \quad
		\subfigure[][$(R_1,R_2)=(2,3)$] {\includegraphics[width=1.7in,height=1.8in]{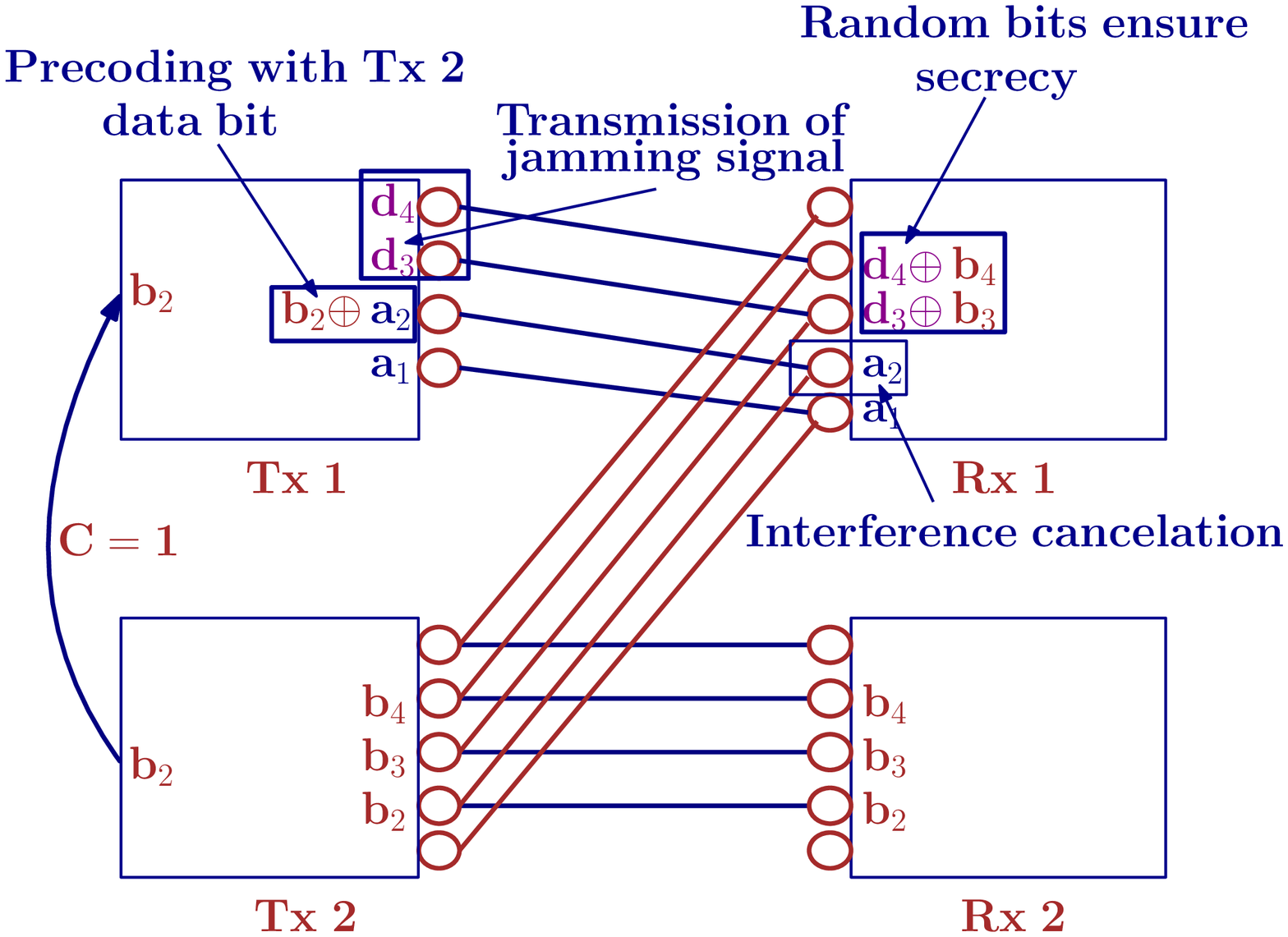}\label{fig:highach2}}} \\
	\caption[]{Deterministic Z-IC with $m=4$, $n=5$ and $C=1$.}\label{fig:highach3}
\end{figure}
\subsubsection{Case 2 $(R_1,R_2)=(C+ n-m,2m-n)$} In this case, as the links corresponding to levels $[1:n-m]$
are only present from transmitter~$2$ to receiver~$1$, transmitter~$2$ does not send any data bits on these
levels. Hence, transmitter~$1$ can send data bits on the levels corresponding to $[1:n-m]$ and these data bits
will be received without any interference at receiver~$1$. When $C>0$, the achievable scheme uses interference
cancelation in addition to transmission random bits. Bits transmitted on the levels $[n-m+1:n-m+C]$ will cause
interference at receiver~$1$. The interference can be eliminated by precoding the data bits on levels $[n-m+1:n-m+C]$
at transmitter~$1$ with the data bits of transmitter~$2$, received through cooperation. Transmitter~$2$ can 
also send data bits securely on the levels corresponding to $[n-m+C+1:m]$ with the help of transmission of 
random bits by transmitter~$1$ on the levels $[n-m+C+1:m]$. Transmitter~$2$ remains silent on the remaining 
top levels, as transmission of data bits on these levels violates the secrecy condition. Hence, the corner point $(C+n-m,2m-n)$ is achievable by the proposed scheme.

The achievable schemes for the two cases are illustrated in Fig.~\ref{fig:highach3}. Note that, to 
achieve the maximum possible rate of $2m-n$ for user~$2$, it is not required to use the cooperative link, and transmitter~$1$ 
sends random (jamming) bits on the levels $[1:m]$. 
As all the corner points are achievable, any points between these corner points is also achievable by time sharing. 
Hence, the capacity region of the Z-IC in the high 
interference regime is characterized.
\end{proof}
\textit{Remarks:} 
\begin{itemize}
	\item When $C=0$ and $1 < \alpha < 2$, if user~$1$ achieves the maximum  rate of $m$, then user~$2$ cannot achieve any nonzero secrecy rate. This is in contrast to the weak/moderate 
	interference case, where user~$1$ achieves the maximum rate of $m$,  while user~$2$  achieves the rate of $m-n$ even without cooperation.
	\item When $1 < \alpha < 2$ and $C \geq 2m-n$, transmitters~$1$ and $2$ can simultaneously achieve the maximum rates of $m$ and $2m-n$, respectively. 
	\item In general, the principle behind the achievable schemes to achieve the corner points $(m, m-n + C)$ and $(m, C)$ in the weak/moderate and high interference regimes, respectively, is precoding of data bits at transmitter~$1$ using the data bits of transmitter~$2$ received on the cooperative link to cancel interference and ensure secrecy. On the other hand, the achievability of the corner points $(m-n + C, m)$ and $(n-m +C, 2m-n)$ in the weak/moderate and high interference regimes, respectively, requires transmission of random bits by transmitter~$1$ to ensure that the signal from transmitter~$2$ remains secure, in addition to precoding data bits received from transmitter~$2$ with its own data bits. 
\end{itemize}
\subsection{Very high interference regime $(\alpha \geq 2)$}
\begin{theorem}\label{th:theorem-det-veryhigh}
In the very high interference regime, i.e., $\alpha \geq 2$, the secrecy capacity of the $2$-user deterministic Z-IC with unidirectional and rate-limited transmitter cooperation is
	\begin{align}
	& R_1 \leq m, R_2 = 0. \label{eq:th-weakmod-ib3}
	\end{align}
\end{theorem}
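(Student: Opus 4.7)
The plan is to handle achievability and converse separately, with the converse being the substantive direction. Achievability of any rate pair with $R_1 \leq m$ and $R_2 = 0$ is immediate: let transmitter~$2$ remain silent and let transmitter~$1$ transmit uncoded $\mathcal{B}(1/2)$ bits on all $m$ of its levels. Receiver~$1$ decodes $W_1$ error-free, receiver~$2$ observes nothing, and the secrecy constraint on $W_2$ is vacuous, so the cooperative link is not used.

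For the converse, $R_1 \leq m$ is the standard point-to-point deterministic-link bound for the direct channel $\mathbf{D}^{q-m}\mathbf{x}_1$. The nontrivial bound is $R_2 = 0$, and here the hypothesis $n \geq 2m$ is essential. The key structural observation is that at receiver~$1$ the contribution of $\mathbf{x}_1$ occupies only $m$ of the $n$ levels of $\mathbf{y}_1$, leaving the complementary $n-m \geq m$ levels equal to the corresponding bits of $\mathbf{x}_2$, uncontaminated by $\mathbf{x}_1$. Since $n - m \geq m$, this uncontaminated portion contains, in particular, the bits of $\mathbf{x}_2$ that form $\mathbf{y}_2 = \mathbf{D}^{q-m}\mathbf{x}_2$. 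Hence one can write $\mathbf{y}_2 = g(\mathbf{y}_1)$ for some deterministic function $g$, and crucially this representation holds \emph{regardless} of the encoding employed at transmitter~$1$ or of its use of the cooperative bits $\mathbf{v}_{21}$, since the positions of $\mathbf{y}_1$ occupied by $\mathbf{x}_1$ are dictated entirely by the shift matrices $\mathbf{D}^{q-m}$ and $\mathbf{D}^{q-n}$.

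Given this, I would apply Fano's inequality at receiver~$2$: reliable decoding of $W_2$ yields $H(W_2 \mid \mathbf{y}_2^N) \leq N \epsilon_N$ with $\epsilon_N \to 0$, and hence $I(W_2 ; \mathbf{y}_2^N) \geq N R_2 - N \epsilon_N$. Data processing along the Markov chain $W_2 \to \mathbf{y}_1^N \to \mathbf{y}_2^N$ then gives $I(W_2 ; \mathbf{y}_1^N) \geq I(W_2 ; \mathbf{y}_2^N) \geq N R_2 - N \epsilon_N$. Imposing the perfect-secrecy constraint $I(W_2 ; \mathbf{y}_1^N) = 0$ forces $R_2 = 0$, and the same conclusion holds under weak or strong secrecy since the argument only requires $\tfrac{1}{N} I(W_2; \mathbf{y}_1^N) \to 0$.

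The main obstacle is to pin down the deterministic-function claim $\mathbf{y}_2 = g(\mathbf{y}_1)$ carefully in terms of the bit-level alignment of the linear deterministic model, and to verify that the cooperative link cannot break this relation. Once that is in place, the converse is uniform in $C$, reflecting the fact that cooperation offers no benefit in the very high interference regime: receiver~$1$ is strictly a better observer of $\mathbf{x}_2$ than receiver~$2$ itself, so no secure rate can be delivered to receiver~$2$ no matter how precoding and random-bit injection are combined at transmitter~$1$.
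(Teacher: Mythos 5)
Your proof is correct. The paper itself does not actually argue the converse: it simply invokes the outer bound of Theorem~3 in \cite{partha-isit-2015} for $R_2 = 0$ and notes that achievability is trivial, so your write-up supplies the substance that the paper outsources to a citation. Your key structural observation is the right one and checks out against the paper's level conventions: at receiver~$1$ the $m$ levels of $\mathbf{x}_1$ collide only with the $m$ least significant received levels of $\mathbf{x}_2$, while receiver~$2$ observes the $m$ most significant levels of $\mathbf{x}_2$; when $n \geq 2m$ these two sets of levels are disjoint, so $\mathbf{y}_2$ is a fixed deterministic function of $\mathbf{y}_1$ irrespective of how transmitter~$1$ uses its own message, random bits, or the cooperative bits $\mathbf{v}_{21}$ (which can only affect the colliding levels). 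The chain Fano $\Rightarrow$ $I(W_2;\mathbf{y}_2^N) \geq NR_2 - N\epsilon_N$, data processing along $W_2 \rightarrow \mathbf{y}_1^N \rightarrow \mathbf{y}_2^N$, and the secrecy constraint at receiver~$1$ then forces $R_2 = 0$, and as you note the argument only needs $\frac{1}{N}I(W_2;\mathbf{y}_1^N) \rightarrow 0$, so it covers perfect, strong, and weak secrecy uniformly in $C$. This is consistent with the paper's Theorem~4 equating the three capacity regions. The one thing worth making explicit when you formalize it is that the per-symbol identity $\mathbf{y}_2 = g(\mathbf{y}_1)$ extends to the block, which is immediate since the channel is memoryless and the map $g$ is the same fixed level-selection at every channel use.
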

\begin{proof}
The outer bound on the rate of user~$2$ in Theorem~3 \cite{partha-isit-2015} shows that user~$2$ cannot achieve any nonzero secrecy rate irrespective of the capacity of the cooperative link. Thus, transmitter~$1$ can send data bits on the levels $[1:m]$, while transmitter~$2$ remains silent. This characterizes the capacity of the deterministic Z-IC in the very high interference regime.
\end{proof}

Interestingly, it turns out that the capacity region of the Z-IC does not change
if the perfect secrecy constraint at the receiver is replaced with the strong or the weak notion of secrecy. This 
result is stated in the following Theorem.
\begin{theorem}
	The secrecy capacity region of the deterministic Z-IC with unidirectional transmitter cooperation satisfies the
	following
	\begin{align}
	\mathcal{C}^{\text{perfect}} = \mathcal{C}^{\text{strong}} = \mathcal{C}^{\text{weak}}, \label{eq:coor-capacity}
	\end{align}
	where $\mathcal{C}^{\text{perfect}}$, $\mathcal{C}^{\text{strong}}$ and $\mathcal{C}^{\text{weak}}$
	correspond to the capacity regions of the $2$-user deterministic Z-IC with unidirectional transmitter cooperation guaranteeing the perfect, strong and weak secrecy constraints at the receivers, respectively.
\end{theorem}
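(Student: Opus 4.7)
The plan is to sandwich all three regions between the same inner and outer bounds. First I would note the trivial chain $\mathcal{C}^{\text{perfect}} \subseteq \mathcal{C}^{\text{strong}} \subseteq \mathcal{C}^{\text{weak}}$, which follows immediately from the implications (perfect $\Rightarrow$ strong $\Rightarrow$ weak) between the three secrecy conditions. To close the loop it suffices to show the reverse inclusion $\mathcal{C}^{\text{weak}} \subseteq \mathcal{C}^{\text{perfect}}$.

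For the outer direction, I would invoke the outer bounds of \cite{partha-isit-2015} that were used in Theorems~\ref{th:theorem-det-weak}--\ref{th:theorem-det-veryhigh}. As pointed out in the remark after Theorem~\ref{th:theorem-det-weak}, the derivation of those bounds does not invoke any secrecy condition at the receivers, and hence they remain valid under the weakest notion of secrecy. Consequently $\mathcal{C}^{\text{weak}}$ is contained in the region described by \eqref{eq:th-weakmod-ib1}, \eqref{eq:th-weakmod-ib2} or \eqref{eq:th-weakmod-ib3}, depending on the interference regime.

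For the inner direction, I would verify that every corner-point scheme constructed in the proofs of Theorems~\ref{th:theorem-det-weak}--\ref{th:theorem-det-veryhigh} in fact achieves perfect secrecy $I(W_2; \mathbf{y}_1^N) = 0$, not merely weak secrecy. The key observation is that on each of the top $n$ levels, the bit observed at receiver~$1$ is constructed to be exactly one of two types: (i) $r \oplus b$, where $r$ is a fresh $\mathcal{B}(1/2)$ random bit generated at transmitter~$1$ and independent of $(W_1,W_2,\mathbf{v}_{21})$; or (ii) $(a \oplus b) \oplus b = a$, a pure data bit of transmitter~$1$ obtained because cooperative precoding exactly cancels the contribution of transmitter~$2$. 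In case~(i) the crypto-lemma shows that $r \oplus b$ is uniform and independent of $b$, while in case~(ii) the observed bit is a function of $W_1$ only. Concatenating across levels and across $N$ channel uses, $\mathbf{y}_1^N$ is a deterministic function of $W_1$ and of an independent uniform bit-string, so $I(W_2; \mathbf{y}_1^N) = 0$ exactly, for every $N$. The non-corner rate pairs are obtained by time-sharing, which preserves perfect secrecy because each of the time-sharing components is itself perfectly secure and the time-sharing variable is public.

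The only real work is the case-by-case bookkeeping: for each corner point in each of the three interference regimes, one has to classify every level of $\mathbf{y}_1$ into type~(i) or type~(ii), and confirm that no residual bit of $W_2$ survives un-masked. This is tedious but routine once the encoding tables in Sec.~\ref{sec:ach-determin} and Figs.~\ref{fig:weakach3}, \ref{fig:highach3} are consulted. Combining the outer inclusion $\mathcal{C}^{\text{weak}} \subseteq \text{outer bound}$ with the inner inclusion $\text{outer bound} \subseteq \mathcal{C}^{\text{perfect}}$ yields $\mathcal{C}^{\text{weak}} \subseteq \mathcal{C}^{\text{perfect}}$, which together with the trivial chain completes \eqref{eq:coor-capacity}.
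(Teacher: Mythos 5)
Your overall strategy---sandwiching all three regions between an achievable region established under perfect secrecy and an outer bound valid under weak secrecy---is exactly the paper's. Your handling of the inner direction is in fact more explicit than the paper's: the paper simply asserts that the schemes of Sec.~\ref{sec:ach-determin} were constructed to satisfy $I(W_2;\mathbf{y}_1)=0$, whereas your level-by-level classification into masked bits $r\oplus b$ (crypto lemma) and exactly-cancelled bits is the right way to verify that claim, and time-sharing with a public time-sharing variable does preserve perfect secrecy.

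The genuine gap is in your justification of the outer direction. You assert that the outer bounds of \cite{partha-isit-2015} ``do not invoke any secrecy condition at the receivers'' and therefore hold automatically under weak secrecy. That is true only in the weak/moderate interference regime, which is all the remark after Theorem~\ref{th:theorem-det-weak} claims. For $1<\alpha<2$ the constraint $R_2\le 2m-n$, and for $\alpha\ge 2$ the constraint $R_2=0$ in \eqref{eq:th-weakmod-ib3}, cannot be secrecy-free: absent any secrecy requirement, user~$2$ trivially achieves $R_2=m$ over its interference-free direct link to receiver~$2$. So those bounds do use the secrecy constraint, and your argument establishes no containment $\mathcal{C}^{\text{weak}}\subseteq\mathcal{C}_{\text{outer}}$ in the high and very high interference regimes, leaving the equality unproven precisely there. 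The repair is the one the paper makes: go back into the derivations of the outer bounds and check that they survive when the hypothesis $I(W_2;\mathbf{y}_1^N)=0$ is replaced by $\tfrac{1}{N}I(W_2;\mathbf{y}_1^N)\le\epsilon$; the secrecy constraint only ever enters as a bound on an information term, so the extra $N\epsilon$ slack vanishes in the per-symbol rate and the same region results. With that substitution your sandwich closes.
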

\begin{proof}
Any communication scheme satisfying the perfect secrecy condition  will automatically satisfy the strong and weak secrecy conditions. Similarly, a communication scheme satisfying strong secrecy will automatically satisfy the weak secrecy condition. Hence, the following holds
	\begin{align}
	\mathcal{C}^{\text{perfect}} \subseteq \mathcal{C}^{\text{strong}} \subseteq 
	\mathcal{C}^{\text{weak}} \subseteq \mathcal{C}_{\text{outer}}^{\text{weak}},
	\label{eq:coor-capacity1}
	\end{align}
	where $\mathcal{C}_{\text{outer}}^{\text{weak}}$ corresponds to the outer bound on the capacity region of the Z-IC with unidirectional transmitter cooperation and weak secrecy constraints at the receivers. The achievable results in Sec.~\ref{sec:ach-determin} are obtained under the perfect secrecy constraints at the
	receivers. On the other hand, it is not difficult to show that the outer bounds on the capacity region in \cite{partha-isit-2015} do not change if perfect secrecy constraint is replaced with weak secrecy constraint.\footnote{This can be shown by using  $\frac{1}{N}I(W_i ; \ybold_j^N) \leq \epsilon$, $i \neq j$,  (weak secrecy) as a measure of secrecy in the derivation of the outer bounds, instead of $I(W_i, \ybold_j) = 0$ (perfect secrecy).} As the achievable rate regions $(\text{i.e.}, \mathcal{C}^{\text{perfect}})$ match with the outer bounds on the
	capacity region $(\text{i.e.}, \mathcal{C}_{\text{outer}}^{\text{weak}})$, the relation in (\ref{eq:coor-capacity}) holds.
\end{proof}

\section{Gaussian Z-IC: Achievable Scheme}\label{sec:ach-gaussian}
For the Gaussian case, a unified achievable scheme is proposed, which is applicable in the weak, moderate
and high interference regimes. The achievable scheme is based on the cooperative 
precoding performed at the transmitters to cancel the interference  at the 
unintended receiver along with stochastic encoding and transmission of 
artificial noise. When the capacity of the cooperative link is not sufficiently 
high, it is not possible to share the entire message of transmitter~$2$ with receiver~$1$ through the cooperative link. Hence, the interference caused at receiver~$1$ by transmitter~$2$ cannot be completely eliminated. Thus, stochastic encoding performed at transmitter~$2$ and artificial noise transmission by transmitter~$1$ can 
provide additional randomness to increase the secrecy rate of user~$2$. 

The achievable scheme is inspired by the approaches used for the deterministic model in Secs.~\ref{sec:ach-det-weakmod} and \ref{sec:ach-det-high-intf}. However, the extension of schemes proposed for the 
deterministic model to that for the Gaussian model is non-trivial.  This is primarily because, in the deterministic case, noise is modeled by truncation and superposition of signals modeled as XOR operation does not account for the carry over of bits across levels, in contrast to the Gaussian model. 

The achievable scheme for the deterministic model is extended to the Gaussian model as follows. Since there is no cooperative link from transmitter~$1$ to transmitter~$2$, transmitter~$1$ cannot share its
message with transmitter~$2$ for cooperation. The message of transmitter~$1$ intended to receiver~$1$ is
inherently secure, as there is no link from transmitter~$1$ to receiver~$2$. This translates to having non-cooperative 
private message $w_{p1} \in \mathcal{W}_{p1} = \{1,2,\ldots,2^{NR_1}\}$ at transmitter~$1$, and 
for each message, it transmits a codeword from a Gaussian codebook 
of  size $2^{NR_{1}}$. Next, for the transmission of data by transmitter~$2$, recall that, in the deterministic 
case, the data bits sent by transmitter~$2$ on the lower levels $[1:m-n]$ are inherently secure in the weak/moderate 
interference regime (Sec.~\ref{sec:ach-det-weakmod}). To enable secure transmission of data bits on the higher 
levels (specifically, levels $[m-n+1 : m]$ in the weak/moderate interference regime and levels $[n-m+1:n]$ in the high interference regime), 
transmitter 2 needs the assistance of transmitter 1. That is, transmitter 1 needs to precode the data bits 
received through the cooperative link, or needs to send a jamming signal so that the other
user's data bits remain undecodable at  receiver~$1$. To translate this scheme to the Gaussian case,
the message at transmitter~$2$ is split into two parts: a non-cooperative private message
$w_{p2} \in \mathcal{W}_{p2}=\{1, 2, \ldots, 2^{NR_{p2}}\}$ and a cooperative private message
$w_{cp2} \in \mathcal{W}_{cp2} = \{1, 2, \ldots, 2^{NR_{cp2}}\}$. Transmitter~$2$ encodes
the non-cooperative private message into $\xbold_{p2}^N$ using stochastic encoding. A stochastic encoder
is specified by a matrix of conditional probability $f_{p2}(x_{p2,k}|w_{p2})$, where $x_{p2, k} \in \mathcal{X}_{p2}$ 
and $w_{p2} \in \mathcal{W}_{p2}$.

For the cooperative private message, transmitters~$1$ and $2$ precode the message $w_{cp2}$ cooperatively such that the codeword carrying the cooperative private message is completely canceled at the non-intended receiver. This cooperative precoding also helps ensure secrecy for the cooperative private message. The details of the encoding and decoding process of the achievable scheme are presented in the following subsection. 
\subsection{Encoding and decoding}
For the non-cooperative private part, transmitter $1$ generates a codebook $\mathcal{C}_{p1}$ containing
$2^{NR_1}$ i.i.d. sequences of length $N$ and its entries are i.i.d. random 
variables from $\mathcal{N}(0, P_{p1})$. Transmitter~$2$ generates two codebooks as 
follows. For the non-cooperative private message, it generates a 
codebook $\mathcal{C}_{cp2}$ containing $2^{N(R_{p2} + R_{p2}')}$ codewords of 
length $N$. The entries of the codebook are drawn  at random from $\mathcal{N}(0, 
P_{p2})$. The $2^{N(R_{p2} + R_{p2}')}$ codewords in the codebook $\mathcal{C}_{p2}$ are randomly grouped into $2^{NR_{p2}}$
bins, with each bin containing $2^{NR_{p2}'}$ codewords. Any codeword in $\mathcal{C}_{p2}$ is indexed as
$\xbold_{p2}^N(w_{p2},w_{p2}')$ for $w_{p2} \in \mathcal{W}_{p2}$ and $w_{p2}' \in \mathcal{W}_{p2}' = \{1,2,\ldots,2^{NR_{p2}'}\}$.
To send $w_{p2}$, transmitter $2$ selects  $w_{p2}'$ uniformly random from the set $\mathcal{W}_{p2}'$  and transmits 
the codeword $\xbold_{p2}^N(w_{p2},w_{p2}')$. For the cooperative private 
message, transmitter~$2$ generates a codebook $\mathcal{C}_{cp2}$ consisting of 
$2^{NR_{cp2}}$ i.i.d. sequences of length $N$. The entries of the codebook are 
chosen at random from $\mathcal{N}(0, P_{cp2})$. This codebook is made available 
at transmitter~$1$. 

To send a message $(w_{p2}, w_{cp2})$, transmitter~$2$ superimposes the cooperative codeword
$\xbold_{cp2}(w_{p2})$ with the non-cooperative codeword  $\xbold_{p2}^N(w_{p2},w_{p2}')$ as 
\begin{align}
\xbold_2^N(w_{p2}, w_{p2}', w_{cp2}) = \xbold_{p2}^N(w_{p2},w_{p2}') + h_d \xbold_{cp2}^N(w_{cp2}). \label{eq:weakint5}
\end{align}
The following power constraint is required to be satisfied at transmitter~$2$:
\begin{align}
P_{p2} + h_d^2 P_{cp2} \leq P,  \label{eq:weakint6c}
\end{align}
where $P_{p2}$ and $P_{cp2}$ are parameters to be chosen later.

Transmitter~$1$ performs precoding as mentioned in (\ref{eq:weakint6}), so that the codeword carrying the
cooperative private message of transmitter~$2$ is canceled at receiver~$1$. This is termed as \textit{cooperative precoding}.
Transmitter~$1$ also adds artificial noise $(\xbold_{a1}^N)$ generated from a Gaussian distribution to increase the achievable secrecy rate
for transmitter~$2$. Thus, transmitter~$1$ sends
\begin{align}
\xbold_1^N(w_{p1}, w_{cp2}) = \xbold_{p1}^N(w_{p1}) - h_c \xbold_{cp2}^N(w_{cp2}) + \xbold_{a1}^N. \label{eq:weakint6}
\end{align}
The following power constraint is required to be satisfied at transmitter~$1$:
\begin{align}
P_{p1} + h_c^2 P_{cp2} + P_{a1} \leq P, \label{eq:weakint6d}
\end{align}
where $P_{p1}$ and $P_{a1}$ are parameters to be chosen later.

The decoding at receivers is performed as follows. Receiver~$1$ looks for a unique index 
$\hat{w}_{p1}$ such that $(\ybold_1^N, \xbold_1^N(\hat{w}_{p1}))$ is jointly typical. Receiver~$2$ looks for a 
unique tuple $(\hat{w}_{p2}, \hat{w}_{p2}',\hat{w}_{cp2})$ such that $(\ybold_2^N, \xbold_{p2}^N(\hat{w}_{p2}, \hat{w}_{p2}'), \xbold_{cp2}^N(\hat{w}_{cp2}))$ 
is jointly typical. Decoding errors at the receivers can occur in one of two ways. First, the receiver may not be able to find any 
codeword that is jointly typical with the received sequence. Second, a wrong codeword is jointly typical with 
the received sequence.

Based on the above encoding  and decoding strategy, the following theorem gives a lower bound on the secrecy 
capacity of the Z-IC with unidirectional transmitter cooperation.
\begin{theorem}\label{th:theorem-ib-weak}
	For the Gaussian Z-IC with unidirectional transmitter cooperation and
	secrecy constraints at the receivers, the achievable rate region is given by
	\begin{align}
	R_1 & \leq I(\xbold_{p1};\ybold_1), \nonumber \\
	R_2 & \leq \min\lcb I(\xbold_{p2}, \xbold_{cp2};\ybold_2), I(\xbold_{p2};\ybold_2|\xbold_{cp2}) + \min\{C_G, \right. \nonumber \\
	&\left. I(\xbold_{cp2};\ybold_2|\xbold_{p2})\} \rcb  - R_{p2}',  \label{eq:weakint6a}
	\end{align}
	where $R_{p2}' = I(\xbold_{p2};\ybold_1|\xbold_{p1})$.
\end{theorem}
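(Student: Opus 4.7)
The plan is to establish achievability through random coding, joint-typicality decoding at each legitimate receiver, and an equivocation calculation for the secrecy constraint at receiver~$1$. For the code described in \eqref{eq:weakint5}--\eqref{eq:weakint6d}, three items must be verified: reliability at receiver~$1$, reliability at receiver~$2$, and the weak-secrecy condition $\frac{1}{N} I(W_2;\ybold_1^N)\to 0$, where $W_2=(W_{p2},W_{cp2})$.

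For reliability at receiver~$1$, the first step is to substitute \eqref{eq:weakint5}--\eqref{eq:weakint6} into $y_1 = h_d x_1 + h_c x_2 + z_1$ and observe that the $\xbold_{cp2}^N$ contributions cancel exactly (coefficient $-h_d h_c$ from $\xbold_1^N$ and $+h_c h_d$ from $\xbold_2^N$), giving $\ybold_1^N = h_d\xbold_{p1}^N + h_d\xbold_{a1}^N + h_c\xbold_{p2}^N + \zbold_1^N$. Treating $h_d\xbold_{a1}^N + h_c\xbold_{p2}^N + \zbold_1^N$ as effective Gaussian noise, the usual joint-typicality decoder on the codebook $\mathcal{C}_{p1}$ succeeds with probability approaching $1$ provided $R_1 < I(\xbold_{p1};\ybold_1)$. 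For reliability at receiver~$2$, the received signal is $\ybold_2^N = h_d\xbold_{p2}^N + h_d^2\xbold_{cp2}^N + \zbold_2^N$, and the decoder jointly identifies $(\hat{w}_{p2},\hat{w}_{p2}',\hat{w}_{cp2})$. A standard two-user multiple-access-type error analysis produces the three bounds $R_{p2}+R_{p2}' \leq I(\xbold_{p2};\ybold_2|\xbold_{cp2})$, $R_{cp2} \leq I(\xbold_{cp2};\ybold_2|\xbold_{p2})$, and $R_{p2}+R_{p2}'+R_{cp2} \leq I(\xbold_{p2},\xbold_{cp2};\ybold_2)$. Since the cooperative link is limited, we additionally require $R_{cp2}\leq C_G$. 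Writing $R_2 = R_{p2}+R_{cp2}$ and performing a Fourier--Motzkin elimination of $R_{p2}$ and $R_{cp2}$ yields $R_2 + R_{p2}' \leq \min\{I(\xbold_{p2},\xbold_{cp2};\ybold_2),\, I(\xbold_{p2};\ybold_2|\xbold_{cp2}) + \min\{C_G, I(\xbold_{cp2};\ybold_2|\xbold_{p2})\}\}$, which matches the claimed bound after moving $R_{p2}'$ to the right-hand side.

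For secrecy, the cancellation at receiver~$1$ implies that $\ybold_1^N$ is conditionally independent of $\xbold_{cp2}^N$ given $(\xbold_{p1}^N,\xbold_{a1}^N,\xbold_{p2}^N)$, so a chain-rule decomposition gives $I(W_2;\ybold_1^N) = I(W_{p2};\ybold_1^N) + I(W_{cp2};\ybold_1^N\mid W_{p2}) = I(W_{p2};\ybold_1^N)$. The remaining leakage is then bounded by $I(W_{p2};\ybold_1^N\mid \xbold_{p1}^N)$ by conditioning on the codeword of transmitter~$1$ (which a worst-case eavesdropping receiver~$1$ could decode), after which the induced mapping $W_{p2}\to\xbold_{p2}^N\to\ybold_1^N$ is viewed as a single-user Wyner wiretap channel with effective noise $h_d\xbold_{a1}^N+\zbold_1^N$. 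The stochastic-encoding bin rate $R_{p2}' = I(\xbold_{p2};\ybold_1|\xbold_{p1})$ is precisely the choice that makes $\frac{1}{N}H(W_{p2}\mid \ybold_1^N)\to R_{p2}$ by the standard equivocation computation: a list-size argument shows that, given $\ybold_1^N$ and $W_{p2}$, the subcodeword index $w_{p2}'$ can be recovered with high probability inside its bin, and Fano's inequality then yields the vanishing-leakage conclusion $\frac{1}{N} I(W_{p2};\ybold_1^N)\to 0$.

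The main obstacle will be the secrecy step. It requires handling the artificial-noise term $\xbold_{a1}^N$ (which contributes randomness but carries no information), granting the worst-case eavesdropper at receiver~$1$ the ability to first decode and subtract $\xbold_{p1}^N$, and verifying that the leakage penalty is exactly $I(\xbold_{p2};\ybold_1|\xbold_{p1})$ rather than a larger conditional mutual information that would also track $\xbold_{cp2}^N$ or $\xbold_{a1}^N$. Keeping the Gaussian distributions of $\xbold_{p1}^N$, $\xbold_{a1}^N$, $\xbold_{p2}^N$ and $\xbold_{cp2}^N$ coherent through the bin-rate chain-rule argument, and showing that the list-decoding probability at the eavesdropper concentrates uniformly over the message set, is where the delicate bookkeeping lies.
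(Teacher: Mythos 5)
Your proposal is correct and follows essentially the same route as the paper's proof: joint-typicality error analysis at both receivers combined with $R_{cp2}\leq C_G$ and $R_2=R_{p2}+R_{cp2}$ to obtain the rate constraints, the observation that cooperative precoding cancels $\xbold_{cp2}^N$ at receiver~$1$ so that $H(W_{cp2}\mid \ybold_1^N, W_{p2})=H(W_{cp2})$, and the standard equivocation computation conditioning on $\xbold_{p1}^N$ with the bin rate $R_{p2}'=I(\xbold_{p2};\ybold_1|\xbold_{p1})$ justified via list decoding within a bin and Fano's inequality. No substantive differences from the paper's argument.
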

\begin{proof}
	See Appendix~\ref{sec:append-ZIC-inner}.
\end{proof}
\textit{Remarks:}
\begin{enumerate}
	\item The term $R_{p2}'$ in Theorem~\ref{th:theorem-ib-weak} accounts for the rate sacrificed by transmitter~$2$ in confusing receiver~$1$ to keep the non-cooperative message of transmitter~$2$ secret. As the  capacity
	of the cooperative link increases, the loss in rate due to the stochastic encoding decreases, as more power can be assigned to the cooperative private message.
	\item When $C_G=0$ and $\alpha \geq 1$, the transmission of artificial noise by transmitter~$1$ is required along with stochastic
	encoding for user~$2$ to achieve a non-zero secrecy rate. 
\end{enumerate}

By evaluating the mutual information terms in (\ref{eq:weakint6a}) and 
taking convex closure of the union of set of regions obtained over different codebook parameters $(P_{p1}, P_{a1}, P_{p2}, P_{cp2})$, 
the following lower bound on the secrecy capacity region is obtained. 
\begin{corollary}\label{th:corr-ib-weak}
	Using the result in Theorem~\ref{th:theorem-ib-weak}, the following 
	rate region is achievable
	\begin{align}
	\mathcal{R}_s \triangleq \text{convex closure of} \lcb \bigcup_{ 0 \leq (\theta_i, \beta_i, \lambda_i) \leq 1,\:i = 1, 2} \mathcal{R}_{\text{Z-IC}}^s (\theta_i, \beta_i, \lambda_i)\rcb, \label{eq:corrach0}
	\end{align}
	where 
	\begin{align}
	\mathcal{R}_{\text{Z-IC}}^s \triangleq & \Bigg\{ (R_1, R_2): R_1 \geq 0, R_2 \geq 0,  \nonumber \\
	& \quad R_1 \leq 0.5\log\left(1 + \frac{h_d^2P_{p1}}{1 + h_d^2P_{a1} + h_c^2 P_{p2}}\right), \nonumber \\
	&  \quad R_2 \leq 0.5\log(1 + h_d^2P_{p2} + h_d^4P_{cp2}) - R_{p2}', \nonumber \\
	& \quad R_2 \leq 0.5\log(1 + h_d^2P_{p2}) + \min\{C_G,  \nonumber \\
	& \quad 0.5\log(1 + h_d^4P_{cp2})\} - R_{p2}'\Bigg\}, 
	\end{align}
	where $R_{p2}' \triangleq 0.5\log\left(1 + \frac{h_c^2P_{p2}}{1 + h_d^2P_{a1}}\right)$, $P_{cp2} \triangleq \frac{\lambda_2}{(\lambda_1 + \lambda_2)h_d^2}P_2$, 
	$P_{p2} \triangleq \frac{\lambda_1}{\lambda_1 + \lambda_2}P_2$,  
	$P_{p1} \triangleq \frac{\theta_1}{\theta_1 + \theta_2}P'$, $P_{a1} \triangleq \frac{\theta_2}{\theta_1 + \theta_2}P'$, $P' \triangleq (P_1 - h_c^2P_{cp2})^{+}$, 
	$P_1 \triangleq \beta_1 P$, and $P_2 \triangleq \beta_2 P$.
\end{corollary}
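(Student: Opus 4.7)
The plan is to specialize each mutual information term in Theorem~\ref{th:theorem-ib-weak} to the jointly Gaussian input distribution induced by the encoding rules (\ref{eq:weakint5})--(\ref{eq:weakint6}), verify that the proposed parameterization respects the power constraints (\ref{eq:weakint6c}) and (\ref{eq:weakint6d}), and then form the convex closure by time-sharing. There is no new information-theoretic argument required beyond Theorem~\ref{th:theorem-ib-weak} itself; the work is essentially algebraic.

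The first step is to write the received signals after substituting $\xbold_1^N$ and $\xbold_2^N$ from (\ref{eq:weakint5})--(\ref{eq:weakint6}) into the channel equations in (\ref{sysmodel1}). The crucial observation is that the coefficient of $\xbold_{cp2}^N$ in $\ybold_1^N$ is $-h_d h_c + h_c h_d = 0$, so the cooperative codeword is perfectly canceled at receiver~$1$. Hence the effective channels are
\begin{align*}
\ybold_1 &= h_d \xbold_{p1} + h_d \xbold_{a1} + h_c \xbold_{p2} + \zbold_1, \\
\ybold_2 &= h_d \xbold_{p2} + h_d^2 \xbold_{cp2} + \zbold_2,
\end{align*}
where $\xbold_{p1}, \xbold_{a1}, \xbold_{p2}, \xbold_{cp2}$ are independent zero-mean Gaussians with variances $P_{p1}, P_{a1}, P_{p2}, P_{cp2}$ respectively. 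Because receiver~$1$ treats $\xbold_{a1}$ and $\xbold_{p2}$ as Gaussian noise when decoding $\xbold_{p1}$, and receiver~$2$ sees only the noise $\zbold_2$, every mutual information term of Theorem~\ref{th:theorem-ib-weak} collapses to a standard $\tfrac{1}{2}\log(1+\textsf{SINR})$ expression. Direct computation yields $I(\xbold_{p1};\ybold_1)$, $I(\xbold_{p2},\xbold_{cp2};\ybold_2)$, $I(\xbold_{p2};\ybold_2\mid\xbold_{cp2})$, $I(\xbold_{cp2};\ybold_2\mid\xbold_{p2})$, and $R_{p2}' = I(\xbold_{p2};\ybold_1\mid\xbold_{p1})$ exactly in the forms that appear in the statement of $\mathcal{R}_{\text{Z-IC}}^s$.

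The next step is to argue that the chosen parameterization $(P_{p1}, P_{a1}, P_{p2}, P_{cp2})$ is admissible. Taking $P_2 = \beta_2 P$ and splitting it as $P_{p2} = \tfrac{\lambda_1}{\lambda_1+\lambda_2} P_2$ and $h_d^2 P_{cp2} = \tfrac{\lambda_2}{\lambda_1+\lambda_2} P_2$ makes (\ref{eq:weakint6c}) hold with equality; similarly, $P_1 = \beta_1 P$ combined with $P' = (P_1 - h_c^2 P_{cp2})^+$ and $P_{p1} + P_{a1} = P'$ makes (\ref{eq:weakint6d}) hold with equality whenever $P_1 \geq h_c^2 P_{cp2}$. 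The positive-part truncation handles the edge case where transmitter~$1$ cannot afford the precoding term, in which case transmitter~$1$ simply assigns zero power to its own private and jamming symbols. Since $(\beta_i, \lambda_i, \theta_i) \in [0,1]$ sweep all convex combinations, every admissible power split is realized.

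The only genuinely delicate point, and what I expect to be the main obstacle, is the bookkeeping around $P'$: one must verify that the precoding power $h_c^2 P_{cp2}$ is correctly subtracted from $P_1$ before allocating between $\xbold_{p1}$ and $\xbold_{a1}$, so that the achievable region under the parameterization coincides with the one obtained by directly optimizing over $(P_{p1}, P_{a1}, P_{p2}, P_{cp2})$ subject to the two sum-power constraints. Once this is established, taking the union over all such parameter tuples gives an achievable region, and its convex closure --- obtainable by standard time-sharing between two operating points --- yields $\mathcal{R}_s$ as defined in (\ref{eq:corrach0}). The rest is mechanical substitution.
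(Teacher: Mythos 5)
Your proposal is correct and follows essentially the same route as the paper's proof: evaluate each mutual information term in Theorem~\ref{th:theorem-ib-weak} for the Gaussian inputs induced by (\ref{eq:weakint5})--(\ref{eq:weakint6}) (with the cooperative codeword canceled at receiver~$1$), check that the parameterization satisfies (\ref{eq:weakint6c}) and (\ref{eq:weakint6d}), and take the convex closure over the power-splitting parameters. Your discussion of the $P'=(P_1-h_c^2P_{cp2})^{+}$ bookkeeping is in fact more explicit than the paper's, which simply states the substitutions.
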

\begin{proof}
	See Appendix~\ref{sec:append-ZIC-inner-corr}.
\end{proof}
\textit{Remarks:}
\begin{enumerate}
	\item In Corollary~\ref{th:corr-ib-weak}, the parameter $\beta_i$ $(0 \leq \beta_i \leq 1)$ acts as a power control parameter for transmitter~$i$ $(i=1,2)$. The parameters
	$\theta_i$ and $\lambda_i$ act as rate splitting parameters for transmitter~$i$. 
	
	\item When $C=0$ $(\text{or } C_G=0)$, the system reduces to the 2-user Z-IC (Gaussian Z-IC) without cooperation, which was studied in \cite{li-isit-2008}. The achievable results in Theorem~$2$ (Theorem~$3$)  in \cite{li-isit-2008} can be obtained as a special case of achievable results for the deterministic model (Gaussian model) in 
	Theorem~\ref{th:theorem-det-weak} (Theorem~\ref{th:theorem-ib-weak}), by setting $C = 0$ ($C_G = 0$) and $0 \leq \alpha \leq 1$. Note that, for both the deterministic and Gaussian models, achievable schemes  on the secrecy capacity region have not been addressed in the literature for the high interference regime $(\alpha > 1)$, even when $C = 0$ ($C_G = 0$).
\end{enumerate}
\section{Approximate secure sum capacity characterization of the Gaussian Z-IC in the weak/moderate interference regime}\label{sec:approx-sec-capacity}
\subsection{Secure sum generalized degrees of freedom (GDOF)}
As mentioned earlier, the capacity region for many multiuser scenarios has remained an open problem, even without secrecy constraints at the receivers. Due to this, there has been an active research interest in approximate characterizations of the capacity. In this context, the notion of \emph{generalized degrees of freedom (GDOF)} has been used as a proxy for the capacity at high SNR and INR, for the IC, without the secrecy constraint~\cite{etkin-TIT-2008}. A natural extension of this to the secure sum GDOF is given by
\begin{align}
d_{\text{sum}}(\kappa, \gamma) = \lim\limits_{\SNRt \rightarrow \infty} \frac{C_{\text{sum}}(\SNRt, \INRt)}{0.5\log \SNRt}, \label{eq:gdof1} 
\end{align}
where $\kappa = \lim\limits_{\SNRt \rightarrow \infty} \frac{\log \INRt}{\log \SNRt}$, $\gamma = \lim\limits_{\SNRt \rightarrow \infty} \frac{C_G}{0.5\log\SNRt}$ and $C_{\text{sum}}$ is the secure sum capacity of the 2-user Gaussian Z-IC with unidirectional transmitter cooperation. To characterize the sum GDOF, $h_d=1$ is assumed without loss of generality, and the following power allocation is used. 
\begin{align}
P_{p1} = \frac{P}{2}, P_{p2} = \frac{1}{h_c^2},  P_{cp2} = \frac{1}{2}\lb P - \frac{1}{h_c^2}\rb \text{ and } P_{a1}=0. \label{eq:gap2}
\end{align}
It is also assumed that $h_c^2 P > 1$, so that the above power allocation is always feasible. The motivation for this power allocation is as follows. The power for the message of transmitter~$1$ is set as $\frac{P}{2}$ to ensure that user~$1$ achieves the maximum GDOF of $1$. Recall that, in the weak/moderate interference regime, transmitter~$2$ can send data bits securely on the lower levels $[1:m-n]$, as the links corresponding to these levels are not present at receiver~$1$. In other words, the data bits transmitted on the lower levels $[1:m-n]$ of transmitter~$2$ are received at or below the noise floor of receiver~$1$. Hence, in the Gaussian case, the power for the non-cooperative private message is chosen such that it is received at the noise floor of the receiver~$1$. Due to this power allocation, the loss in rate of user~$2$ due to stochastic encoding is $R_{p1}' = 0.5$ bits/s/Hz. Hence, the loss in achievable secrecy rate due to stochastic encoding does not scale with  SNR and INR.  The cooperative private message of transmitter~$2$ is assigned with a power of  $\frac{1}{2}\lb P - \frac{1}{h_c^2}\rb$. 

In the following theorem, the secure sum GDOF is characterized using the power allocation in (\ref{eq:gap2}) for all values of $C_G$ in the weak/moderate interference regime. 
\begin{theorem}\label{th:sumgdof}
	The optimal secure sum GDOF of the $2$ user Gaussian Z-IC with unidirectional transmitter cooperation in the weak/moderate interference regime is
	\begin{align}
	d_{\text{sum}}(\kappa, \gamma) = \min\lcb 2, 2- \kappa +  \min\lb \gamma,1\rb\rcb. 
	\end{align}
\end{theorem}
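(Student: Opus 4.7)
My plan is to prove this in two parts. For the achievability, I would evaluate the rate region of Corollary~\ref{th:corr-ib-weak} at the specific power allocation~(\ref{eq:gap2}) and take the high-SNR limit. For the converse, I would invoke the sum-rate outer bound of \cite{partha-isit-2015} in its Gaussian form, and combine it with the trivial point-to-point bounds. The two sides match in the weak/moderate regime because of the simplification $2-\kappa+\gamma \geq 2$ whenever $\gamma \geq 1$ and $\kappa \leq 1$, which folds $\gamma$ into $\min(\gamma,1)$ inside the outer $\min\{2,\cdot\}$.

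For the achievability step, after substituting $P_{p1}=P/2$, $P_{p2}=1/h_c^2$, $P_{cp2}=\tfrac{1}{2}(P-1/h_c^2)$, $P_{a1}=0$ into Corollary~\ref{th:corr-ib-weak}, the stochastic-encoding penalty becomes $R_{p2}' = 0.5\log(1 + h_c^2 P_{p2}) = 0.5$~bits/s/Hz, a constant that vanishes under the GDOF normalization. The user-$1$ bound simplifies to $0.5\log(1 + P/4)$, contributing $d_1 = 1$. Writing $h_c^2 = \SNRt^{\kappa-1}$ (with $h_d = 1$), the first bound on $R_2$ scales like $0.5\log(P/2)$ and contributes $1$, while the second scales like $(1-\kappa)\cdot 0.5\log\SNRt + \min\{C_G, 0.5\log\SNRt\}$ and contributes $(1-\kappa) + \min(\gamma,1)$. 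Taking the minimum of the two $R_2$ bounds yields $d_2 = \min\{1,\,(1-\kappa) + \min(\gamma,1)\}$, so that $d_1 + d_2 = \min\{2,\,2 - \kappa + \min(\gamma,1)\}$ is achievable.

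For the converse step, I would invoke the Gaussian counterpart of the sum-rate outer bound from Theorem~$1$ of \cite{partha-isit-2015}, which has the form $R_1 + R_2 \leq 0.5\log(1 + \SNRt) + 0.5\log(1 + \SNRt/(1+\INRt)) + C_G + \oone$, together with the individual bound $R_i \leq 0.5\log(1 + \SNRt) + \oone$. Normalizing by $0.5\log\SNRt$ and letting $\SNRt \to \infty$ give $d_{\text{sum}} \leq 2 - \kappa + \gamma$ and $d_{\text{sum}} \leq 2$, respectively, so $d_{\text{sum}} \leq \min\{2,\, 2 - \kappa + \gamma\}$. In the weak/moderate regime ($\kappa \leq 1$), whenever $\gamma \geq 1$ we have $2 - \kappa + \gamma \geq 2$, so the cap at $2$ is already binding; hence $\min\{2,\, 2 - \kappa + \gamma\} = \min\{2,\, 2 - \kappa + \min(\gamma,1)\}$, matching the achievable sum GDOF.

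The main obstacle will be verifying that the sum-rate outer bound, which is derived in the deterministic setting in the companion paper, lifts cleanly to the Gaussian case with an $\oone$ slack that is uniform in $(\SNRt,\INRt,C_G)$. A secondary point is to confirm feasibility of the power allocation~(\ref{eq:gap2}) throughout the limit; this is ensured by the standing assumption $h_c^2 P > 1$, i.e., $\SNRt^{\kappa} > 1$, which holds for any $\kappa > 0$ at sufficiently high $\SNRt$. Beyond these two points, the remaining work is routine arithmetic on the mutual-information expressions in Corollary~\ref{th:corr-ib-weak}.
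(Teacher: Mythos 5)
Your achievability argument is essentially identical to the paper's: substitute the power allocation~(\ref{eq:gap2}) into Corollary~\ref{th:corr-ib-weak}, note that $R_{p2}'=0.5$~bits is an $\oone$ term, and read off $d_1=1$ and $d_2=\min\{1,(1-\kappa)+\min(\gamma,1)\}$; your arithmetic checks out, as does the observation that $\min\{2,2-\kappa+\gamma\}=\min\{2,2-\kappa+\min(\gamma,1)\}$ for $\kappa\leq 1$. The difference is in the converse. You propose to ``lift'' the deterministic sum-rate bound of Theorem~1 in \cite{partha-isit-2015} to the Gaussian setting and correctly flag that lift as the main unresolved obstacle. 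The paper does not take that route: it invokes two \emph{already-Gaussian} outer bounds --- Theorem~4 of \cite{bagheri-arxiv-2010}, namely $R_1+R_2\leq 0.5\log(1+\SNRt+\INRt+2\SNRprod)+0.5\log(1+\SNRt/(1+\INRt))+C_G$, which is derived \emph{without} any secrecy constraint, and Theorem~5 of \cite{partha-outer-arxiv-2016}, which is derived with it --- and shows both evaluate to $d_{\text{sum}}\leq 2-\kappa+\gamma$. This is exactly the functional form you wrote down, so the bound you need already exists in the cited literature and no deterministic-to-Gaussian lifting argument is required; citing those results closes the one gap in your proposal. (Using the non-secrecy bound of \cite{bagheri-arxiv-2010} also yields the paper's side remark that secrecy costs nothing in sum GDOF in this regime, which your route would not surface.)
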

\begin{proof}
	See Appendix~\ref{sec:securegdof}.
\end{proof}
\textit{Remarks:} 
\begin{enumerate}
	\item The outer bound on the sum rate in Theorems~$4$  and $5$ are used to obtain outer bound on the sum GDOF~\cite{bagheri-arxiv-2010, partha-outer-arxiv-2016, partha-outer-submitted-2016}. Both the bounds give the same results in terms of GDOF. Note that the derivation of the outer bound in Theorem~$4$ does not use the secrecy constraint at receiver~$1$ \cite{bagheri-arxiv-2010}. Hence, there is no penalty in the sum GDOF due to the secrecy constraint at receiver in the weak/moderate interference regime for all values of $C_G$. 
	\item When $\gamma = \kappa$, $d_{\text{sum}}(\kappa, \gamma) = 2 $. Hence, both the users can achieve the maximum GDOF of $1$ simultaneously. Similarly, in the deterministic model, when $C = n$ $(\text{or } \frac{C}{m} = \alpha)$, both the users can achieve a maximum rate of $m$ simultaneously. 
\end{enumerate}

As the proposed scheme with power allocation in (\ref{eq:gap2}) can achieve the optimal sum GDOF, the achievable sum rate should be within finite number of bits from the outer bound. In the following subsection, the gap between the achievable sum rate and outer bound is characterized. 
\subsection{Finite bit gap result on the sum rate capacity}
In this section, the sum rate capacity of the $2$-user Gaussian Z-IC with unidirectional transmitter cooperation
is shown to lie within $2$ bits/s/Hz of the outer bound in the weak/moderate interference regime $(\INRt < \SNRt)$ for all values of $C_G$. Note that this gap is the worst case gap. To show the finite gap result, the power allocation in (\ref{eq:gap2}) is used in Corollary~\ref{th:corr-ib-weak} to obtain a lower bound on the secure sum rate capacity. This result is stated in the following theorem.
\begin{theorem}\label{th:finitegapresult}
	The secure sum rate capacity of the $2$-user Gaussian Z-IC with unidirectional transmitter cooperation lies within $2$ bits/s/Hz of the sum rate outer bound in the weak/moderate interference regime for all values of $C_G$, i.e., 
	\begin{align}
	R_{\text{sum}} \leq C_{\text{sum}} \leq C_{\text{sum}}^{\text{outer}} \leq R_{\text{sum}} + 2, \label{eq:gapresult}
	\end{align} 
	where $R_{\text{sum}}$ and $C_{\text{sum}}^{\text{outer}}$ correspond to the lower bound and upper bound on the secure sum capacity (i.e., $C_{\text{sum}}$), respectively. 
\end{theorem}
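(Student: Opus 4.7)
The plan is to substitute the power allocation from \eqref{eq:gap2} into the three rate constraints of Corollary~\ref{th:corr-ib-weak} to obtain a closed-form lower bound $R_{\text{sum}}$ on the secure sum rate, and then compare it term-by-term against the sum-rate outer bound $C_{\text{sum}}^{\text{outer}}$ from \cite{partha-isit-2015}. Since that outer bound was derived without invoking the secrecy constraint at receiver~$1$ (as noted in the remark following Theorem~\ref{th:sumgdof}), it already upper-bounds the secure sum capacity $C_{\text{sum}}$, so establishing $C_{\text{sum}}^{\text{outer}} - R_{\text{sum}} \leq 2$ is equivalent to the full sandwich in \eqref{eq:gapresult}.

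First I would evaluate each ingredient of Corollary~\ref{th:corr-ib-weak} under the allocation $P_{p1} = P/2$, $P_{p2} = 1/h_c^2$, $P_{cp2} = \tfrac{1}{2}(P - 1/h_c^2)$, $P_{a1} = 0$. The crucial simplification is that $h_c^2 P_{p2} = 1$ together with $P_{a1} = 0$ forces $R_{p2}' = \tfrac{1}{2}\log 2 = \tfrac{1}{2}$ bit, a constant that does not scale with $\SNRt$ or $\INRt$. Substituting into the region then gives $R_1 \leq \tfrac{1}{2}\log(1 + h_d^2 P/4)$ and two bounds on $R_2$, from which $R_{\text{sum}} = R_1 + R_2$ collapses to a sum of standard capacity-type logarithms plus (in the cooperation-limited case) $C_G$, minus the constant $\tfrac{1}{2}$.

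Next I would split into the two cases arising from the inner minimum in the second $R_2$-bound. When $C_G \geq \tfrac{1}{2}\log(1 + h_d^4 P_{cp2})$, cooperation is not binding and $R_{\text{sum}}$ matches $\tfrac{1}{2}\log(1+\SNRt) + \tfrac{1}{2}\log\bigl(1 + \SNRt/(1+\INRt)\bigr)$ up to additive constants coming from the factor $P/4$ in the $R_1$ bound and the $h_d^2/h_c^2$ offset in the $R_2$ bound. When $C_G < \tfrac{1}{2}\log(1 + h_d^4 P_{cp2})$, the achievable sum rate picks up exactly $C_G$, which matches the cooperation-dependent $C_G$ term in the outer bound of \cite{partha-isit-2015}. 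In both cases, standard inequalities such as $\log(1 + a + b) \leq \log(1+a) + \log(1+b)$ and the regime condition $\INRt < \SNRt$ should absorb the residual slack.

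The main obstacle is verifying that the total additive constant remains bounded by $2$ \emph{uniformly} in $(\SNRt,\INRt,C_G)$ throughout the weak/moderate regime rather than blowing up in a corner case. The cooperative precoding consumes $h_c^2 P_{cp2}$ of transmitter~$1$'s budget, so one must confirm that the implicit reduction in usable power for $\xbold_{p1}^N$ is already accounted for by the $\tfrac{1}{2}$-bit penalty from $R_{p2}'$ and the $\log 4$-type offset in $R_1$. The transition at $C_G = \tfrac{1}{2}\log(1 + h_d^4 P_{cp2})$ must also be handled carefully, most likely by rewriting $P_{cp2}$ as $\tfrac{P}{2}(1 - \INRt/\SNRt)$ up to a $1/h_c^2$ term and exploiting $\INRt < \SNRt$ to bound the outer-bound expression against the achievable one within the stated $2$-bit budget.
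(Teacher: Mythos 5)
Your proposal is correct and follows essentially the same route as the paper: substitute the power allocation of \eqref{eq:gap2} into Corollary~\ref{th:corr-ib-weak}, exploit $h_c^2 P_{p2}=1$ and $P_{a1}=0$ so that $R_{p2}'=0.5$ bit, and then compare case-by-case against the trivial outer bound $\log(1+\SNRt)$ when cooperation is not binding and against the $C_G$-dependent sum-rate outer bound when it is. The only refinement needed is that the exhaustive case split must be taken over \emph{both} minima in the $R_2$ constraints (three cases, as in the paper's proof), since even when $C_G < 0.5\log(1+h_d^4 P_{cp2})$ the term $I_1 = 0.5\log(1+h_d^2P_{p2}+h_d^4P_{cp2})$ may still be the binding one, in which case the comparison reverts to the trivial outer bound rather than the $C_G$-dependent one.
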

\begin{proof}
	See Appendix~\ref{sec:finitebitresult}.
\end{proof}
\section{Numerical Results and Discussion}\label{sec:results}
In the following sections, some numerical examples are presented for the deterministic and Gaussian cases,
to get insights into the system performance in different interference regimes.
\subsection{Deterministic Z-IC with unidirectional transmitter cooperation}
In Fig.~\ref{fig:capacity_region_m5n3}, the capacity region given in Theorem~\ref{th:theorem-det-weak}
is plotted for $m=5$, $n=3$ and various values of $C$. 
When $C=0$, there are three interfering links $[m-n + 1: m]$ from transmitter~$2$ to 
receiver~$1$. These links act as a shared data channel, i.e., they are usable only by one of the two 
transmitters for data transmission. It is interesting to note that under the secrecy constraint at
receiver~$1$, user~$2$ can achieve $5$ bits per channel use (bpcu), when user~$1$ achieves $2$ bpcu. This is due to the fact that
transmitter~$1$ can send a jamming signal (random bits) on the levels $[1:n]$, which enables user~$2$ to send data bits securely on the 
higher levels $[m-n+1:m]$ in addition to transmitting on the lower levels $[1:m-n]$. Recall that the links corresponding to the lower levels $[1:m-n]$ do not exist at receiver~$1$. Hence, user~$1$ has to sacrifice a rate of $n=3$ bpcu, which can also be observed
from the sum rate constraint in Theorem~\ref{th:theorem-det-weak}.  For $C \geq 3$, the capacity region 
becomes the square region, where 
both users can simultaneously achieve $m=5$ bpcu.  Note that the outer bound used to establish the capacity region in Theorem~\ref{th:theorem-det-weak} does not use the secrecy constraints at the receivers. Therefore, Fig.~\ref{fig:capacity_region_m5n3} is also a plot of the 
capacity region without secrecy constraints. 
\begin{figure}[t]
	\begin{center}
		\includegraphics[width=3.5in, height=3in]{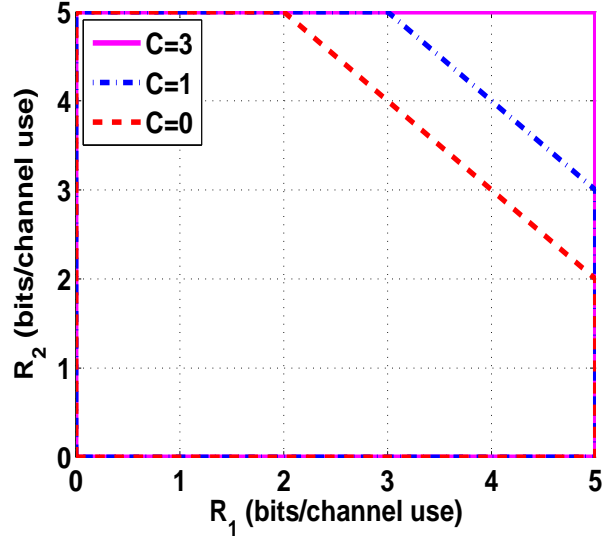}
		\caption{Secrecy capacity region of the deterministic Z-IC with~$(m,n)=(5,3)$. This corresponds to the weak interference regime.}
		\label{fig:capacity_region_m5n3}
	\end{center}
\end{figure}
\begin{figure}[t]
	\begin{center}
		\includegraphics[width=3.5in, height=3in]{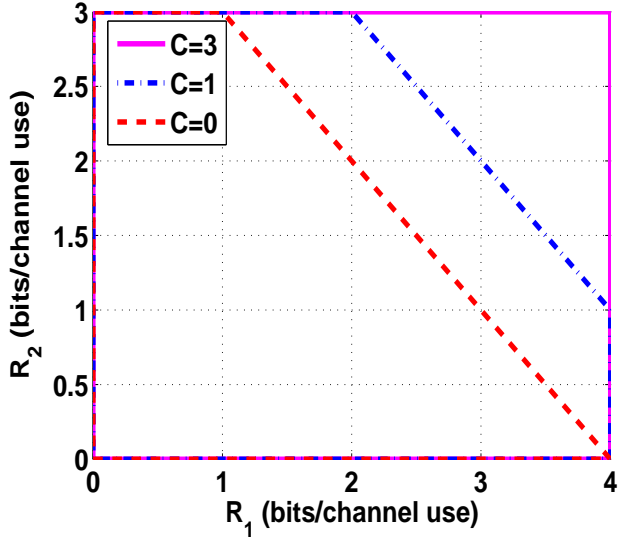}
		\caption{Secrecy capacity region of the deterministic Z-IC with~$(m,n)=(4,5)$. This corresponds to the high interference regime.}\vspace{-0.7cm}
		\label{capacity_region_m4n5}
	\end{center}
\end{figure}
In Fig.~\ref{capacity_region_m4n5}, the capacity region given in Theorem~\ref{th:theorem-det-high} is 
plotted for $m=4$, $n=5$ and various values of $C$. When 
$C=0$ and user~$1$ achieves the maximum rate of $m$, i.e., $4$ bpcu, user~$2$ cannot achieve any nonzero 
secrecy rate. When user~$2$ achieves the maximum rate of $2m-n$, i.e., $3$ bpcu, user~$1$ achieves a nonzero 
secrecy rate of $1$ bpcu. When $C=1$ and user~$1$ achieves the maximum rate of $4$ bpcu, from the sum rate
constraint in Theorem~\ref{th:theorem-det-high}, user~$2$ can achieve a rate of $1$ 
bpcu. On the other hand, when user~$2$ achieves a rate of $3$ bpcu, user~$1$ 
can achieve a rate of $2$ bpcu. Note that both the users can achieve a nonzero secrecy rate with cooperation, 
in contrast to the non-cooperating case.  The improvement in the capacity region 
with increase in value of $C$ is clear from the figure. 

In Figs.~\ref{fig:capacity_region_m5n3} and \ref{capacity_region_m4n5}, when $C > 0$, a part of the interference can be canceled at the unintended receiver, which 
leads to a gain in the sum rate. The need for sending a jamming signal also decreases 
with increase in the value of $C$. Hence, as  $C$ increases, the 
capacity region enlarges.

In Fig.~\ref{fig:sum_capacity_deterministic}, the secure sum capacity of the deterministic Z-IC is plotted against $\alpha$ for different values of $C$ using the result in Sec.~\ref{sec:ach-determin}. In this case, the secure sum capacity is normalized by $m$. When $C=0$, as $\alpha$ increases, the sum capacity decreases and becomes constant for $\alpha > 1$.  As the value of the cooperative link increases, in the initial part of the weak interference regime, both the users can achieve the maximum rate, i.e., $m$. This is due to the fact that capacity of the cooperative link is sufficient to cancel the interference at receiver~$1$. However, with further increase in the value of $C$, the secure sum capacity starts decreasing. In the very high interference regime, user~$2$ cannot achieve any nonzero secrecy rate irrespective of the value of $C$.
\begin{figure}[t]
	\begin{center}
		\includegraphics[width=3.5in, height=3in]{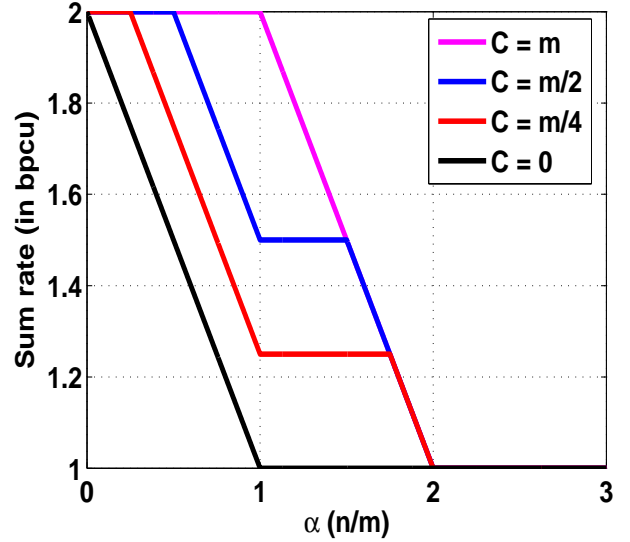}
		\caption{Secure sum capacity for the deterministic Z-IC. Rate is normalized with $m$.}
		\label{fig:sum_capacity_deterministic}
	\end{center}
\end{figure}
\subsection{Gaussian Z-IC with unidirectional transmitter cooperation}
In Figs.~\ref{fig:gaussian_result1} and \ref{fig:gaussian_result2}, the achievable results in Corollary~\ref{th:corr-ib-weak} are plotted along with
the outer bounds obtained in \cite{partha-outer-arxiv-2016, partha-outer-submitted-2016} for different values of $C_G$, in the weak and high
interference regime, respectively. When $C_G > 0$, a part of the interference can be canceled at the unintended receiver, which leads to a gain in the rate due to cooperation. In particular, the improvement in the sum rate performance for both the cases can be observed from these figures. As the capacity of the cooperative link increases, less power is assigned to send the non-cooperative private message of transmitter~$2$, which in turn also reduces the loss in rate due to stochastic encoding.
\begin{figure}
	\centering
	\includegraphics[width=3.5in, height=3in]{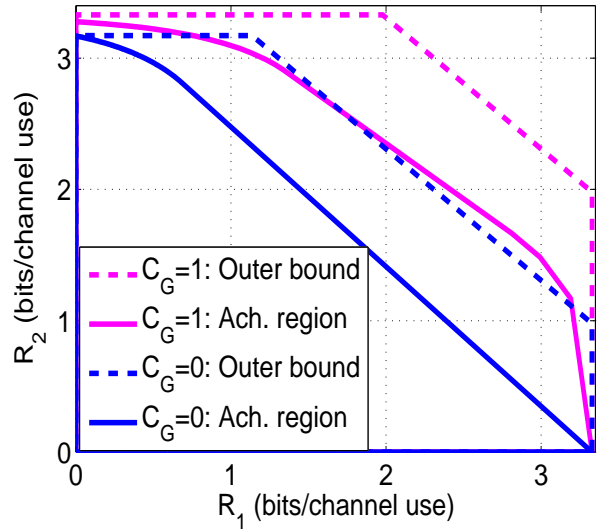}\\
	\caption{Achievable rate region for the Gaussian model: $P=100$, $h_d=1$ and $h_c=0.5$.}\label{fig:gaussian_result1}
\end{figure}
\begin{figure}
	\centering 
	\includegraphics[width=3.5in, height=3in]{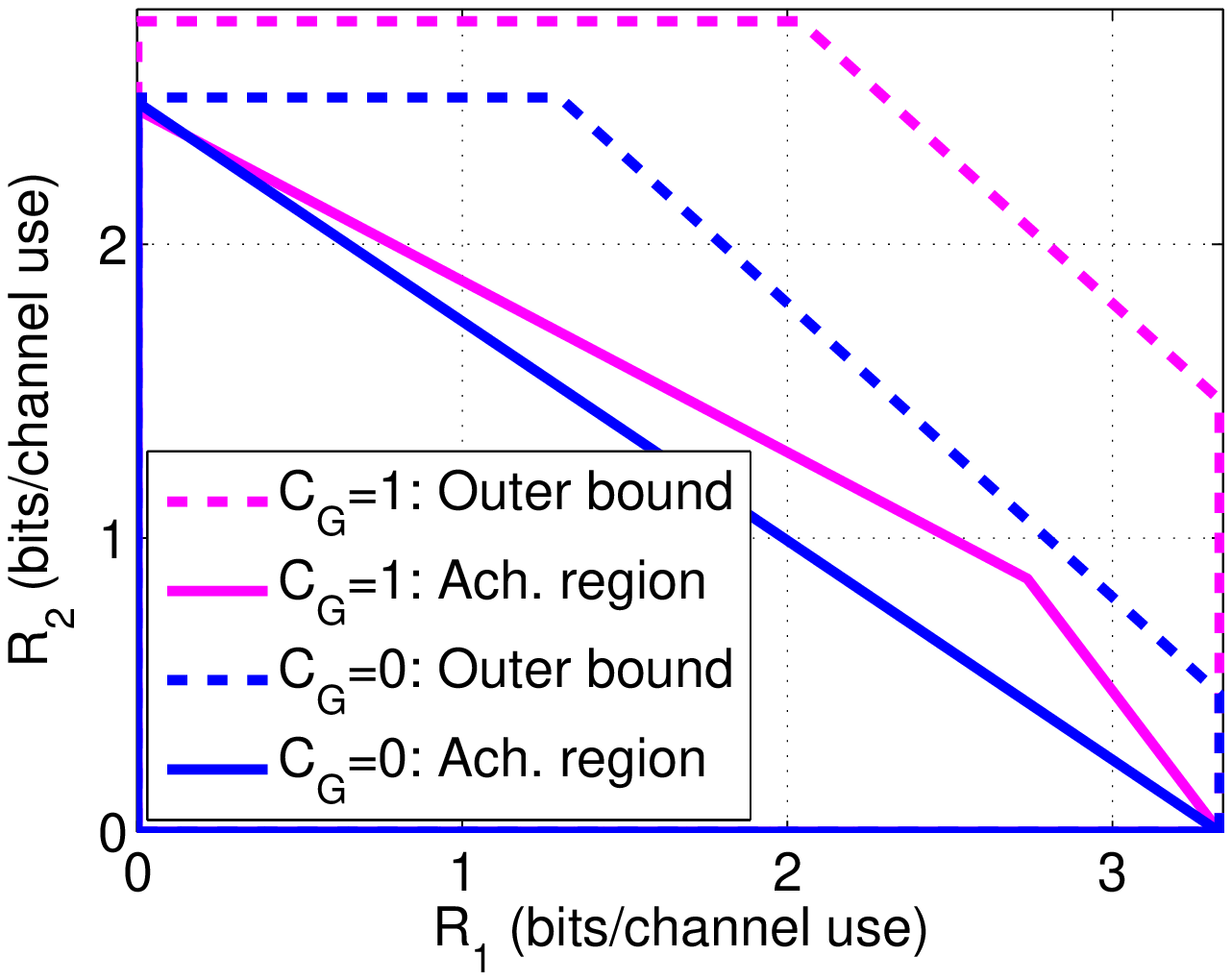}\\
	\caption{Achievable rate region for the Gaussian model: $P=100$, $h_d=1$ and $h_c=1.5$.}\label{fig:gaussian_result2}
\end{figure}
\begin{figure}
	\centering
	\includegraphics[width=3.5in, height=3in]{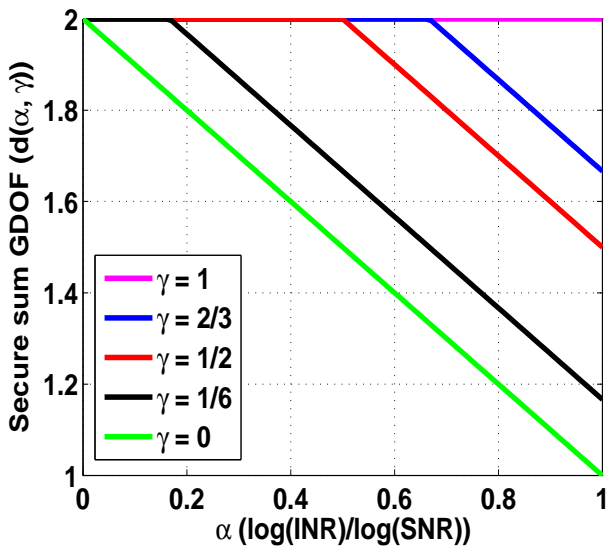}\\
	\caption{Secure sum GDOF in the weak/moderate interference regime for the Gaussian model. In the plot, $\gamma$ corresponds to the scaling of the capacity of the cooperative link with respect to $0.5\log \SNRt$.}\label{fig:gdof_result1}
\end{figure}

In Fig.~\ref{fig:gdof_result1}, the secure sum GDOF stated in Theorem~\ref{th:sumgdof} is plotted against $\alpha$ for various value of $\gamma$. From the figure, it can be noticed that with cooperation it is possible for both the users to achieve the maximum GDOF, i.e., $1$, in the initial part of the weak/moderate interference regime, if the capacity of the cooperative link scales with SNR. In these cases, there is no loss in terms of GDOF due to the secrecy constraint at the receiver. 

\section{Conclusions}\label{sec:conclusion}
This work explored the role of limited-rate unidirectional transmitter
cooperation in facilitating secure communication over the $2$-user Z-IC. For the
deterministic case, the achievable scheme uses a combination of interference
cancelation and transmission of random bits. The secrecy capacity region of the deterministic model is characterized over all interference regimes and for all values of $C$. The study of the deterministic model gave useful insights for
the Gaussian case. The proposed scheme for the Gaussian model uses a fusion
of cooperative precoding for interference cancelation, stochastic encoding and
artificial noise transmission for ensuring secrecy of the unintended message at the receiver.  The secure sum GDOF of the Gaussian Z-IC is characterized for the weak/moderate interference regimes. The sum 
rate capacity is also shown to lie within~$2$~bits of the outer bound in
the weak/moderate interference regime for all values of the capacity of the cooperative link, $C_G$. It is found that cooperation between the users can facilitate secure communication over Z-IC except for the very high interference regime. It is also found that secrecy constraint at the receiver does not hurt the capacity in the weak/moderate interference regime for the deterministic model. Similarly, it is found that there is no loss in the secure sum GDOF in the weak/moderate interference regime due to the secrecy constraint at the receiver.
\appendix
\subsection{Proof of Theorem~\ref{th:theorem-ib-weak}}\label{sec:append-ZIC-inner}
The proof involves analyzing the error probability at the decoders for the proposed encoding scheme,  along 
with equivocation computation. The
equivocation computation is necessary to choose how much of its own rate transmitter~$2$ must sacrifice  to
keep the non-cooperative private message secret. One of the main novelties of the proof lies in precoding of
the cooperative private message of transmitter~$2$ at transmitter~$1$, which cancels the interference at receiver~$1$ and at the 
same time
ensures secrecy of the cooperative private message.
\subsubsection{Error probability analysis} For receivers~$1$ and $2$, define the following events
\begin{align}
E_i & \triangleq \{(\ybold_1^N, \xbold_{p1}^N(i)) \in T_\epsilon^N(P_{Y_1X_{p1}})\}, \label{eq:weakint7} \\
F_{ijk} & \triangleq \{(\ybold_2^N, \xbold_{p2}^N(i,j), \xbold_{cp2}^N(k)) \in T_\epsilon^N(P_{Y_2X_{p2}X_{cp2}})\}, \label{eq:weakint8}
\end{align}
where $T_\epsilon^N(P_{Y_1X_{p1}})$ denotes the set of jointly typical sequences
$\ybold_1$ and $\xbold_{p1}$ with respect to $P(\ybold_1,\xbold_{p1})$ and
$T_\epsilon^N(P_{Y_2X_{p2}X_{cp2}})$ denotes the set of jointly typical sequences $\ybold_2$,
$\xbold_{p2}$ and $\xbold_{cp2}$ with respect to $P(\ybold_2,\xbold_{p2},\xbold_{cp2})$. Without loss of
generality, assume that transmitters~$1$ and $2$ send $\xbold_1^N(1,1)$ and $\xbold_2^N(1,1,1)$, respectively.
An error occurs if the transmitted and received codewords are not jointly typical, or a wrong codeword is
jointly typical with the received codewords. Using the union of events bound and asymptotic equipartition property (AEP), it can be shown that 
$\lambda_{e1}^N  = P(E_1^c \bigcup \cup_{i \neq 1} E_i) \leq P(E_1^c) + \displaystyle\sum_{i \neq 1} P(E_i) \rightarrow 0$ as $N \rightarrow \infty$ provided
\begin{align}
R_1 \leq I(\xbold_{p1}; \ybold_1). \label{eq:weakint11}
\end{align}

Similarly, the probability of error at receiver~$2$, i.e., $\lambda_{e2}^N  = P(F_{111}^c \bigcup \cup_{(i, j, k) \neq (1, 1, 1)} F_{ijk}) \leq P(F_{111}^c) + \displaystyle\sum_{(i, j, k) \neq (1, 1, 1)} P(F_{ijk}) \rightarrow 0$ as $N \rightarrow \infty$ provided
\begin{align}
R_{p2} + R_{p2}' & \leq I(\xbold_{p2};\ybold_2|\xbold_{cp2}), \label{eq:weakint14} \\
R_{cp2} & \leq I(\xbold_{cp2};\ybold_2|\xbold_{p2}),  \label{eq:weakint16} \\
R_{p2} + R_{p2}' + R_{cp2} & \leq I(\xbold_{p2},\xbold_{cp2};\ybold_2). \label{eq:weakint18}
\end{align}
Due to rate-limited cooperation, the following condition is required to be satisfied for the cooperative private
message
\begin{align}
R_{cp2} \leq C_G. \label{eq:weakint19}
\end{align}
Hence, using (\ref{eq:weakint11}), (\ref{eq:weakint14}), (\ref{eq:weakint16}), (\ref{eq:weakint18}), (\ref{eq:weakint19}), and $R_2 = R_{p2} + R_{cp2}$, (\ref{eq:weakint6a}) is obtained. 

In the following, $R_{p2}'$ is determined for ensuring secrecy of the non-cooperative private message of transmitter~$2$ at receiver~$1$.
\subsubsection{Equivocation computation}
Writing $H(W_2|\ybold_1^N)  = H(W_{p2}|\ybold_1^N) + H(W_{cp2}|\ybold_1^N, W_{p2})$, and noting that $H(W_{cp2}|\ybold_1^N, W_{p2}) = H(W_{cp2})$ because the codeword carrying the cooperative private message is completely canceled at receiver~$1$
and the cooperative private message is chosen independent of the non-cooperative private message at transmitter~$2$. Hence, to ensure secrecy of the message of transmitter~$2$, it is required to show the following
\begin{align}
H(W_{p2}|\ybold_1^N) \geq NR_{p2} - N\epsilon_N. \label{eq:weakint22}
\end{align}

The term $H(W_{p2}|\ybold_1^N)$ can be lower bounded as follows
\begin{align}
& H(W_{p2}|\ybold_1^N)\nonumber \\
& \geq H(W_{p2}|\ybold_1^N, \xbold_{p1}^N), \nonumber \\
& \stackrel{(a)}{=}  H(W_{p2}) - I(W_{p2};\ybold_1^N|\xbold_{p1}^N), \nonumber \\
& \stackrel{(c)}{=}  H(W_{p2}) - I(W_{p2};\ybold_1^N|\xbold_{p1}^N) + I(W_{p2};\ybold_1^N|\xbold_{p1}^N, \xbold_{p2}^N), \nonumber \\
& = H(W_{p2}) - h(\ybold_1^N|\xbold_{p1}^N) + h(\ybold_1^N|\xbold_{p1}^N, W_{p2}) \nonumber \\
& \qquad + h(\ybold_1^N|\xbold_{p2}^N, \xbold_{p1}^N) - h(\ybold_1^N|\xbold_{p2}^N, \xbold_{p1}^N, W_{p2}), \nonumber \\
& = H(W_{p2}) - I(\xbold_{p2}^N; \ybold_1^N|\xbold_{p1}^N) + I(\xbold_{p2}^N; \ybold_1^N|\xbold_{p1}^N, W_{p2}), \nonumber \\
& \geq  H(W_{p2}) - NI(\xbold_{p2}; \ybold_1|\xbold_{p1}) \nonumber \\
& \qquad + I(\xbold_{p2}^N; \ybold_1^N|\xbold_{p1}^N, W_{p2})-N\epsilon_N, \label{eq:weakint23}
\end{align}
where (a) is obtained using the relation: $I(W_{p2};\ybold_1^N|\xbold_{p1}^N) = H(W_{p2}|\xbold_{p1}^N) - H(W_{p2}|\ybold_1^N, \xbold_{p1}^N)$ and
the fact that $W_{p2}$ is  independent of $\xbold_{p1}^N$; and (c) is obtained
using the fact that $W_{p2} \rightarrow (\xbold_{p1}, \xbold_{p2}) \rightarrow \ybold_1$ forms a Markov chain
and hence, $I(W_{p2};\ybold_1^N|\xbold_{p1}^N, \xbold_{p2}^N)=0$.

The third term in (\ref{eq:weakint23}) is simplified  as follows
\begin{align}
I(\xbold_{p2}^N; \ybold_1^N|\xbold_{p1}^N, W_{p2})
& = H(\xbold_{p2}^N| W_{p2}) - H(\xbold_{p2}^N|\ybold_1^N, \xbold_{p1}^N, W_{p2}), \nonumber \\
& = N R_{p2}' - H(\xbold_{p2}^N|\ybold_1^N, \xbold_{p1}^N, W_{p2}). \label{eq:weakint25}
\end{align}
To bound the entropy term in (\ref{eq:weakint25}), consider the decoding of $W_{p2}'$ at receiver~$1$
assuming that a genie has provided $W_{p2}$ as side-information to receiver~$1$. For a given
$W_{p2}=w_{p2}$, assuming that $w_{p2}'$ was sent by transmitter~$2$ and receiver~$1$ knows
$\ybold_1^N = y_1^N$ and $\xbold_{p1}^N = x_{p1}^N$, receiver~$1$ tries to decode $w_{p2}'$. Receiver~$1$ declares that $j$ was sent if $(\xbold_{p2}^N(w_{p2}, j), \ybold_1^N)$ is
jointly typical and such $j$ exists and is unique. Otherwise an error is declared. Using AEP, it can be shown
that the probability of error can be made arbitrarily small if
\begin{align}
R_{p2}' \leq I(\xbold_{p2};\ybold_1|\xbold_{p1}).  \label{eq:weakint26}
\end{align}
When the condition in (\ref{eq:weakint26}) is satisfied and for sufficiently large $N$, using Fano's inequality, the following is obtained
\begin{align}
H(\xbold_{p2}^N|\ybold_1^N, \xbold_{p1}^N, W_{p2}= w_{p2}) \leq \delta_1. \label{eq:weakint27}
\end{align}
Using the above, the last term in (\ref{eq:weakint25}) is bounded as follows:
\begin{align}
& H(\xbold_{p2}^N|\ybold_1^N, \xbold_{p1}^N, W_{p2}) \nonumber \\
&  = \displaystyle\sum_{w_{p2}} P(w_{p2}) H(\xbold_{p2}^N|\ybold_1^N, \xbold_{p1}^N, W_{p2}= w_{p2}), \nonumber \\
& \leq N\delta_1. \label{eq:weakint28}
\end{align}
Hence, given $W_{p2}$, the codeword chosen for the non-cooperative private message for transmitter~$2$ is a \textit{good code} for receiver~$1$ with high probability if the condition in (\ref{eq:weakint26}) is satisfied.

Using (\ref{eq:weakint25}) and (\ref{eq:weakint28}), (\ref{eq:weakint23}) reduces to the
following
\begin{align}
H(W_{p2}|\ybold_1^N) & \geq N R_{p2} - NI(\xbold_{p2};\ybold_1|\xbold_{p1}) + NR_{p2}' - N\epsilon_N.  \label{eq:weakint30}
\end{align}
By choosing $R_{p2}' = I(\xbold_{p2};\ybold_1|\xbold_{p1}) - \epsilon$, secrecy of the non-cooperative
private message is ensured and (\ref{eq:weakint22}) is obtained. Substituting this choice into (\ref{eq:weakint30})  leads to (\ref{eq:weakint6a}).
\subsection{Proof of Corollary~\ref{th:corr-ib-weak}}\label{sec:append-ZIC-inner-corr}
The first term in (\ref{eq:weakint6a}) is evaluated as follows
\begin{align}
R_1  \leq  0.5\log\left(1 + \frac{h_d^2P_{p1}}{1 + h_d^2P_{a1} + h_c^2 P_{p2}}\right)\label{eq:corrach2}
\end{align}
where the power allocations are as mentioned in the statement of the theorem. The second term in
(\ref{eq:weakint6a}) is simplified as follows
\begin{align}
R_2 &\leq 0.5\log(1 + h_d^2P_{p2} + h_d^4P_{cp2}) - R_{p2}', \label{eq:corrach3}
\end{align}
 where $R_{p2}' = 0.5\log\lb 1 + \frac{h_c^2P_{p2}}{1 + h_d^2P_{a1}}\rb$.
 
The last term in (\ref{eq:weakint6a}) is simplified as follows
\begin{align}
R_2 & \leq 0.5\log(1 + h_d^2P_{p2}) + \min \lcb C_G, 0.5\log(1 + h_d^4P_{cp2})\rcb \nonumber \\
& \qquad  - R_{p2}'. \label{eq:corrach5}
\end{align}
Taking convex closure of (\ref{eq:corrach2}) and the minimum of (\ref{eq:corrach3}) and  (\ref{eq:corrach5}) over
different values of $\theta_i, \beta_i$ and  $\lambda_i$, the achievable secrecy rate in (\ref{eq:corrach0}) is obtained. 
The parameters $\theta_i, \beta_i$ and  $\lambda_i$ are defined in the statement 
of the Corollary. This completes the proof. 
\subsection{Proof of Theorem~\ref{th:sumgdof}}\label{sec:securegdof}
Using Corollary~\ref{th:corr-ib-weak} and the power allocation in  (\ref{eq:gap2}), the lower bound on the sum rate reduces to
\begin{align}
& R_1 + R_2 \nonumber \\
& \leq 0.5\log \lb 1 + \frac{P}{4}\rb + \min\lcb 0.5\log\lb 1 + \frac{1}{2h_c^2} + \frac{P}{2}\rb, \right. \nonumber \\
& \left. 0.5\log \lb 1 + \frac{1}{h_c^2}\rb + \min 
\lcb C_G, 0.5\log \lb 1 + \frac{1}{2} \lb P - \frac{1}{h_c^2}\rb\rb\rcb\rcb \nonumber \\
&  - 0.5\log2, \nonumber \\
& = 0.5\log \SNRt + \min\lcb 0.5\log\SNRt, 0.5\log \frac{\SNRt}{\INRt} \right. \nonumber \\
& \left. \qquad + \min\lcb C_G, 0.5\log \SNRt\rcb\rcb + \oone, \nonumber \\
\text{or } & d_{\text{sum}}(\kappa, \gamma)  = \min\{2, 2 -\kappa + \min\lb 1, \gamma\rb\}. \label{eq:sumgdof2}
\end{align}
Hence, the achievable sum GDOF becomes
\begin{align}
d_{\text{sum}}(\kappa, \gamma) = \min\lcb 2, 2-\kappa + \gamma\rcb.    \label{eq:sumgdof4}
\end{align}
To establish the GDOF optimality of the proposed scheme, considers the following outer bounds on the sum rate. As the individual rates of each user is upper bounded by $0.5\log\lb 1 + \SNRt\rb$,  a trivial outer bound on the sum rate is: $R_1 + R_2 \leq \log\lb 1 + \SNRt\rb$. Hence, the outer bound on the secure sum GDOF becomes $d_{\text{sum}}(\kappa, \gamma) \leq 2$. 

Next, consider the outer bound on the sum rate in Theorem~$4$ in \cite{bagheri-arxiv-2010}
\begin{align}
& R_1 + R_2 \nonumber \\
& \leq 0.5\log(1 + \text{SNR} + \text{INR} + 2\sqrt{\text{SNR}\cdot\text{INR}}) \nonumber \\
& \quad + 0.5\log\lb1 + \frac{\text{SNR}}{1 + \text{INR}} \rb + C_G, \nonumber \\
& \leq 0.5\log\lb 1 + 3\SNRt + \INRt\rb + 0.5 \log\lb 1 + \SNRt + \INRt\rb\nonumber \\
&\qquad - 0.5\log\lb 1 + \INRt\rb + C_G, \nonumber \\
& = \log \SNRt - 0.5\log \INRt + C_G + \oone, \nonumber \\
\text{or } & d_{\text{sum}}(\kappa, \gamma)  \leq 2 - \kappa + \gamma.  \label{eq:sumgdof7}
\end{align}
Next, starting from the sum rate bound in Theorem~$5$ in \cite{partha-outer-arxiv-2016} and using a similar procedure as the above, it can be shown that $d_{\text{sum}}(\kappa, \gamma) \leq 2 - \kappa + \gamma$. Hence, although (unlike Theorem~$4$ in \cite{bagheri-arxiv-2010}) Theorem~$5$ in \cite{partha-outer-arxiv-2016} was derived accounting for the secrecy constraint, both the theorems lead to the same outer bound on the GDOF. Finally, the outer bound on the secure sum GDOF  becomes
\begin{align}
d_{\text{sum}}(\kappa, \gamma) & \leq \min\lcb 2, 2 - \kappa + \gamma \rcb. \label{eq:sumgdof9}
\end{align}
It is easy to verify that the outer bound on the GDOF in (\ref{eq:sumgdof9}) coincides with the achievable GDOF in (\ref{eq:sumgdof4}). Hence, the proposed scheme is GDOF optimal, and this completes the proof.
\subsection{Proof of Theorem~\ref{th:finitegapresult}}\label{sec:finitebitresult}
Using Corollary~\ref{th:corr-ib-weak} and the power allocation in (\ref{eq:gap2}), the lower bound on the sum rate reduces to
\begin{align}
& R_1 + R_2 \nonumber \\
& \geq 0.5\log\left(1 + \frac{P_{p1}}{1 + h_c^2 P_{p2}}\right) + \text{min}\left\{\underbrace{0.5\log(1 + P_{p2} + P_{cp2})}_{I_1}, \right. \nonumber \\
& \left. \qquad \underbrace{0.5\log(1 + P_{p2}) + \min\{C_G, 0.5\log(1 + P_{cp2})\}}_{I_2} \right\} \nonumber \\
& \quad - 0.5\log(1 + h_c^2P_{p2}).\label{eq:gap1}
\end{align}
To determine the gap, the following exhaustive cases are considered. 
\subsubsection{When $I_1 \leq I_2$} In this case, (\ref{eq:gap1}) reduces to
\begin{align}
& R_1 + R_2 \nonumber \\
& \geq 0.5\log\lb 1 + \frac{P}{4}\rb + 0.5\log\lb 1 + \frac{1}{2h_c^2} + \frac{P}{2}\rb - 0.5\log2, \label{eq:gap3a} \\
& > 0.5\log(1 + \SNRt) + 0.5\log(1 + \SNRt) - 2. \label{eq:gap3}
\end{align}
A trivial outer bound on the sum rate is $R_1 + R_2 < \log(1 + \SNRt)$. Hence, comparing this outer bound on the sum rate with (\ref{eq:gap3}), the gap is at most $2$ bits/s/Hz.
\subsubsection{When $I_1 > I_2$ and $0.5\log(1 + P_{cp2}) > C_G$} In this case, the lower bound on the
sum rate in (\ref{eq:gap1}) reduces to 
\begin{align}
R_1 + R_2 & \geq 0.5\log\lb 1 + \frac{\SNRt}{4}\rb + 0.5\log\lb 1 + \frac{\SNRt}{\INRt}\rb + C_G \nonumber \\
& \quad -  0.5\log(1 + h_c^2P_{p2}), \nonumber \\
&  > 0.5\log(1 + \SNRt) + 0.5\log\lb 1 + \frac{\SNRt}{\INRt}\rb + C_G \nonumber \\
& \quad - 1.5. \label{eq:gap5}
\end{align}
To calculate the gap, the  following outer bound on the sum rate in Theorem~$5$ is used \cite{partha-outer-arxiv-2016}.
\begin{align}
R_1 + R_2 \leq \log\lb 1 + \SNRt\rb - 0.5\log\lb 1 + \INRt\rb + C_G. \label{eq:gap5a}
\end{align}
Subtracting (\ref{eq:gap5}) from the sum rate outer bound in (\ref{eq:gap5a}), 
it can be seen that the gap is at most $2$ bits/s/Hz. 
\subsubsection{When $I_1 > I_2$ and $0.5\log(1 + P_{cp2}) \leq C_G$} In this case, the lower bound on the
sum rate reduces to (\ref{eq:gap3a}), for which the gap is shown to be at most $2$ bits/s/Hz. 

Hence, the sum rate capacity of the Z-IC with unidirectional transmitter cooperation and the secrecy constraints at the
receivers is within $2$ bits/s/Hz of the outer bound. This completes the proof. 


\bibliographystyle{IEEEtran}
\bibliography{IEEEabrv,refs_onesidedIC}
\end{document}